\newtheorem{z}{}[section]
\newtheorem{thm}[z]{Theorem}
\newtheorem{lem}[z]{Lemma}
\newtheorem{alg}[z]{Algorithm}
\newtheorem{examp}[z]{Example}
\newtheorem{defin}[z]{Definition}
\newcommand{\N}{\mathbb{N}}
\newcommand{\R}{\mathbb{R}}
\newcommand{\E}{\mathbb{E}}
\newcommand{\id}{\mathrm{d}}
\DeclareMathOperator{\eps}{\varepsilon}
\DeclareMathOperator{\var}{var}
\newcommand{\spc}{\mathcal{C}}
\newcommand\numberthis{\addtocounter{equation}{1}\tag{\theequation}}
\newcommand{\Dto}{\overset{\mathcal{D}}{\longrightarrow}}
\newcommand{\Pto}{\overset{\mathrm{P}}{\longrightarrow}}
\title{\LARGE Convergence of random-weight sequential Monte Carlo methods}
\author[1]{Paul~B.~Rohrbach}
\author[1,2]{Robert~L.~Jack}
\affil[1]{Department of Applied Mathematics and Theoretical Physics, University of Cambridge, Wilberforce Road, Cambridge CB3 0WA, United Kingdom}
\affil[2]{Yusuf Hamied Department of Chemistry, University of Cambridge, Lensfield Road, Cambridge CB2 1EW, United Kingdom}
\date{\today}
\begin{document}
\maketitle 

\begin{abstract}
We investigate the properties of a sequential Monte Carlo method where the particle weight that appears in the algorithm is estimated by a positive, unbiased estimator.
We present broadly-applicable convergence results, including a central limit theorem, and we discuss their relevance for applications in statistical physics.
Using these results, we show that the resampling step reduces the impact of the randomness of the weights on the asymptotic variance of the estimator.
In addition, we explore the limits of convergence of the sequential Monte Carlo method, with a focus on almost sure convergence.
We construct an example algorithm where we can prove convergence in probability, but which does not converge almost surely, even in the non-random-weight case.
\end{abstract}


\section{Introduction}

Sequential Monte Carlo (SMC) methods are algorithms that approximate a sequence of probability distributions by a sequence of interacting particle systems, using importance sampling and resampling.
These computational tools are very widely used in many disciplines including statistics, physics, and finance \cite{del2006sequential,ionides2006inference,hsu2011review,jasra2011sequential}. 
In signal processing for example, these methods are well known for their usage in prediction, filtering, and smoothing on hidden Markov models \cite{cappe2005inference,doucet2009tutorial}, where they are often also referred to as particle filters.
Methods related to SMC have also been exploited in problems from statistical physics~\cite{iba2001population,grassberger2002go,angeli2019rare}.
In particular, recent work on coarse-grained modelling of soft matter systems proposed an SMC method where unbiased estimates of weights are used for correction of the distribution of the SMC particles~\cite{rohrbach2022multilevel}.
In the following, we refer to this method as the reverse-mapping SMC (RMSMC) method.
The random weights are essential in that setting, because the weights are evaluated as free-energy differences, for which unbiased estimates are available, but exact results are not.
A numerical characterisation of the performance of that method was given in~\cite{rohrbach2022multilevel}.
A central motivation of this work is to provide a rigorous basis for its convergence.

While results for convergence of SMC methods are well-established~\cite{del1996nonlinear,del1999central,kunsch2005recursive}, dealing with random weights requires some extra analysis (a discussion of previous work in this direction is given below).  
Our results apply within a generic algorithmic setting for random-weight SMC, inspired by~\cite{chopin2004central}, complementing previous analysis that focussed on the specific case of particle filters~\cite{fearnhead2008particle,fearnhead2010random}.
We extend previous work in several ways.
First, we prove a central limit theorem (CLT) in our general setting, following the method of~\cite{cappe2005inference}.
This theorem is sufficiently general to provide a rigorous foundation for the RMSMC method.
Second, we analyse the impact that the randomness of the weights has on the asymptotic variance of the SMC estimator.
As might be expected, randomness in the weights increases the asymptotic variance, but we show that this can be ameliorated by suitable resampling, within SMC.
We provide numerical results to illustrate this effect.
Thirdly, we discuss almost-sure convergence of SMC methods, including algorithms within the setting of~\cite{chopin2004central}, and extensions to random weights.
We provide an example (with non-random weights) to show that convergence in probability of SMC is not sufficient to ensure almost-sure convergence, consistent with previous results in less general settings~\cite{crisan2000convergence,crisan2002survey,hu2008basic}, but contrary to the discussion of~\cite{chopin2004central}.
Finally, we discuss the implications of this last result for proofs of CLTs within SMC (in particular, we explain that a direct generalization of the proof of~\cite{chopin2004central} is not sufficient to derive the CLTs quoted here).

\subsection{Connections to previous work}

For SMC with non-random weights, broadly-applicable convergence results have been established, including convergence in probability, and CLTs \cite{chopin2004central,cappe2005inference,douc2008limit,chan2013general}.
Under more constrained conditions, one can also prove the almost sure convergence of SMC estimators \cite{crisan2000convergence,crisan2002survey,hu2008basic}.
We refer to \cite{chopin2020introduction} for an introduction and to \cite{del2004feynman} for a detailed analysis of the theoretical properties of SMC.
In general, the challenges of such analyses are that the particles of the simulated system are not independent, so that standard MC convergence results do not apply.

As well as these results, the modern theory of particle filters includes finer convergence results than the CLT, for example \cite{whiteley2013stability,douc2014long}, which hold for SMC methods in the limit of long sequences.
However, the RMSMC method does not use long sequences (the only relevant limit is the number of particles).
The variance that appears in the CLT is the natural error estimate in such applications, hence our focus on CLTs in this work.

Previous work has also developed generalised versions of the SMC method that allow the weights to be estimated stochastically, including SMC$^2$ \cite{chopin2013smc2} for the analysis of state space models, particle filters for continuous time processes \cite{fearnhead2008particle,fearnhead2010random}, and the nested sequential Monte Carlo algorithm \cite{naesseth2015nested}.
More generally, replacing an intractable factor in a Monte Carlo (MC) scheme by an unbiased estimate has been proven to be useful in extending their applicability.
For example, the pseudo marginal Metropolis Hastings algorithm \cite{andrieu2009pseudo} makes it possible to compute estimates with respect to distributions where the density cannot be computed exactly.
Such methods are relevant in physics~\cite{kobayashi2019correction,kobayashi2021critical,rohrbach2022multilevel}, because 
the proposal and target distribution of SMC are linked by an estimated free energy difference, which has to be estimated by a stochastic annealing procedure between the two distributions.

For analysis  of the convergence of such random-weight methods, a general proposal of \cite{fearnhead2008particle} was to include the random weights within the particles' states, and then to apply the method of~\cite{chopin2004central}.
However, the technical details of such a proof would require that almost-sure convergence of the algorithm is established, before proving the CLT.
We explain below that this is not possible in general.
To avoid any confusion about such technical details, we provide (for completeness) a separate proof of the CLT, for random and non-random weights.
The proof is obtained as a straightforward translation of the approach of~\cite{cappe2005inference}, from its original formulation in hidden Markov models, to our algorithmic setting.
As such, the main messages of this work are the general setting for random-weight methods, which includes the RMSMC method; the interpretation of the asymptotic variance and its implications for the performance of practical methods; the differences in conditions for almost-sure convergence and convergence in probability; and the comparison of different proof strategies for CLTs.

This paper is organised as follows.
In Section \ref{sec:setup}, we give a brief overview of the SMC algorithm and its random-weight generalisation, and we outline its relationship to the physics application of~\cite{rohrbach2022multilevel}.
Section \ref{sec:alternatives} presents our convergence results, including a central limit theorem.
We give a simple numerical example, showing how resampling can be optimised to reduce the detrimental effects of randomness in the weights.
In Section \ref{sec:failure}, we construct an example where the SMC algorithm does not converge almost surely and discuss the impact this has on the results of \cite{chopin2004central}.

\section{Structure of the SMC method} \label{sec:setup}

This section defines the SMC algorithms that we consider.
We describe the non-random-weight version of the algorithm following the approach of \cite{chopin2004central} before generalising it to allow for random weights.

\subsection{Non-random weights}

The target of the SMC algorithm is a sequence of probability distributions $\pi_t$, $t=0, \dots, T$, on measurable spaces $(\spc_t, \Omega_t)$.
At each step $t$, SMC approximates the distribution $\pi_t$ by a weighted set of $H \in \N$ particles.
We assume here that $H$ is fixed (independent of $t$) although this assumption is easily relaxed at the expense of a more cumbersome notation.
The position of the $j$th particle is $C_t^{j,H} \in \spc_t$ and its weight is $W_t^{j,H} \in \R_{>0}$: these are random variables.  (Uppercase letters are used to denote random variables, throughout the text.)
The particles approximate the distribution $\pi_t$ in the sense that for a suitable measurable quantity of interest $\phi: \spc_t \to \R$, the SMC estimator
\begin{equation} \label{eqn:smcestimator}
    \mathbb{E}_{\pi_t} [\phi(C_t)]  \approx \sum_{j=1}^H \frac{W_t^{j,H}}{\sum_{k=1}^H W_t^{k,H} } \phi(C^{j,H}_t) 
\end{equation}
approximates the expectation of $\phi$ with respect to $\pi_t$.
(We write $\mathbb{E}_{\pi_t} [\phi(C_t)]$ for expectations of $C_t\sim \pi_t$.)

To generate the particles at step $t$, we mutate the particles from the previous step $t-1$ by a Markov transition kernel
$
    k_t: \spc_{t-1} \times \Omega_t \to [0, 1]
$
that maps $\spc_{t-1}$ to the space $\mathcal{P}(\spc_t)$ of probability distributions on $\spc_{t}$.
We denote the distribution of the mutated particles by
\begin{equation}
    \tilde \pi_t(\id c_t) = (\pi_{t-1} k_t)(\id c_t) = \int_{\mathcal{C}_{t-1}} \pi_{t-1}(\id c_{t-1}) k_t(c_{t-1}, \id c_t) .
\end{equation}
Generally, this distribution will differ from the target distribution $\pi_t$ at step $t$.
We assume that $\pi_t$ is absolutely continuous with respect to $\tilde \pi_t$ and compensate for this difference by reweighting the particles with the Radon-Nikodym derivative
\begin{equation} \label{eqn:defweight}
    v_t(c_t) := \frac{\id \pi_t}{\id \tilde \pi_t}(c_t) ,
\end{equation}
for $c_t \in \mathcal{C}_t$.
We assume further that there exists a strictly positive version of the weights $v_t > 0$.
In applications, the Radon-Nikodym derivative in \eqref{eqn:defweight} is typically known only  up to constant factors.
For the algorithm, it is only necessary to evaluate a function $w_t$ that satisfies
\begin{equation} \label{eqn:defuweight}
    w_t(C_t) = z_t v_t(C_t),
\end{equation}
where $z_t$ is an (unknown) constant.
This constant is not required when implementing the SMC algorithm, but it will be convenient to work with the normalised weights $v_t$ in the proofs.

The SMC algorithm begins by initialising the particles from some instrumental distribution $\pi_0$.
Then, the SMC algorithm iteratively applies the steps described above for a fixed population size $H$, combined with a resampling step that replaces the set of weighted particles by a set of unweighted ones.
\begin{alg}[Sequential Monte Carlo] \label{alg:smc}
    \normalfont
    Start by initialising an unweighted set of $H$ particles $(\hat C_0^{j, H})_{j \leq H}$ whose elements are independently and identically distributed (i.i.d.)~samples from $\pi_0$.
    Iterate the following steps for $t=1, 2, \dots, T$:
    \begin{enumerate}
    \item Mutation:
        Given an unweighted set $(\hat C_{t-1}^{j, H})_{j \leq H}$ which approximates $\pi_{t-1}$, mutate each particle by drawing from $k_t$ to obtain
    	\begin{equation}
    		C_t^{j, H} \sim k_t(\hat C_{t-1}^{j, H}, \cdot).
    	\end{equation}
    	These new particles (with unit weight) approximate the intermediate distribution $\tilde \pi_t$.
    \item Correction:
    	Assign to each mutated particle a weight 
    	\begin{equation} \label{eqn:defofweight}
    		W_t^{j, H} = w_t(C_t^{j,H})
    	\end{equation}
    	such that the weighted particles $(C_t^{j, H}, W_t^{j, H})_{j \leq H}$ approximate $\pi_t$.
    	In particular, one may estimate $\E_{\pi_t}[\phi(C_t)]$ by \eqref{eqn:smcestimator}.
    \item Selection:
    	Resample the weighted set of particles to obtain an unweighted set
    	\begin{equation}
    		(C_t^{j, H}, W_t^{j, H})_{j \leq H} \longrightarrow (\hat C_t^{j, H})_{j \leq H}
    	\end{equation}
    	using multinomial resampling \cite{douc2005comparison}.
    	The resulting unweighted set $(\hat C^{j,H}_t)_{j \leq H}$ approximates $\pi_t$.
    \end{enumerate}
\end{alg}
We call this the non-random-weight SMC algorithm, since the weights $W_t^{j,H}$ are related to the particles $C_t^{j, H}$ by a deterministic function $w_t$.  
This algorithm is very similar to that of~\cite{chopin2004central}, although the initialisation step is formulated in a slightly different way, which facilitates the generalisation to random weights.
\begin{examp}[Particle filter] \label{ex:ParticleFilter}
    The canonical example of an SMC method is the particle filter which was first introduced in 1993 by \cite{gordon1993novel} to solve the filtering problem for hidden Markov models.
    We show that the particle filter is a variant of Algorithm \ref{alg:smc}.
    This example follows \cite{doucet2009tutorial} in presentation, see that work for further details.
    For simplicity, we assume here that all distributions and kernels admit densities with respect to the Lebesgue measure.
    Let $(X_t)_{0 \leq t \leq T}$, $X_t \in \R^n,$ be a discrete-time Markov process with initial distribution $X_0 \sim \mu_0(x_0)$ given by a probability density $\mu_0$ and
    \begin{equation}
        X_t \mid (X_{t-1} = x_{t-1}) \sim f_t(x_{t-1}, \cdot),
    \end{equation}
    with Markov transition kernel $f_t$.
    For a hidden Markov model, this $X_t$ (the hidden process) is observed indirectly, via an observation process $(Y_t)_{1 \leq t \leq T}$, $Y_t \in \R^{m}$, which is conditionally independent given $(X_t)_t$ with marginal probability
    \begin{equation}
        Y_t \mid (X_t = x_t) \sim g_t(y_t \mid x_t), \;\; t=1, \dots, T.
    \end{equation}
    Having observed the process $Y_{1:T} = y_{1:T}$, where $z_{i:j} = (z_i, z_{i+1}, \dots, z_j)$, our goal is to investigate the posterior distribution of the hidden process given the observations
    \begin{equation}
        \pi_T(x_{0:T}) \propto \mu_0(x_0) \prod_{t=1}^{T} f_t(x_{t-1}, x_t) g_t(y_t \mid x_t), \;\; x_{0:T} \in \R^{(T+1) \times m}
    \end{equation}
    and, in particular, the filtering distribution $\pi_T(x_T) = \int \pi_t(x_{0:T}) \id x_{0:T-1}$.
    
    We want to use an SMC algorithm to generate samples from the full posterior distribution $\pi_T(x_{0:T})$.
    For this, we need to define a mutation step and corresponding kernel $k_t$ for which the evaluation of the particle weight $w_t$ is tractable.
    Here, the dimension of the particles $C_t = x_{0:t} \in \R^{(t+1) \times n}$ increases with each step $t$.
    An intuitive way of proposing a new particle $x_{0:t}$ given $x_{0:t-1}$ is to predict its next entry $x_{t}$ via a Markov transition kernel
    \begin{equation}
        x_{t} \sim k_{t}(x_{0:t-1}, \cdot).
    \end{equation}
    Then, the intermediate distribution is given by
    \begin{equation}
        \tilde \pi_{t}(x_{0:t}) \propto \pi_{t-1}(x_{0:t-1}) k_{t-1}(x_{0:t-1}, x_t)
    \end{equation}
    and we can compute the particle weight up to a constant
    \begin{equation}
        w_t(x_{0:t}) = \frac{f_{t+1}(x_{t-1}, x_t) g_t(y_t \mid x_t)}{k_t(x_{0:t-1}, x_t)}.
    \end{equation}
    This special case of Algorithm \ref{alg:smc} is an implementation of the particle filter algorithm \cite{doucet2009tutorial}.
\end{examp}

\subsection{Random weights}

To implement Algorithm \ref{alg:smc}, we require that the following operations are tractable: (i) sampling from $\pi_0$; (ii) mutation of samples via the kernels $(k_1,k_2,\dots)$; (iii) evaluation of the Radon-Nikodym derivatives $w_t$ up to a constant.
We can relax the last requirement (iii) by replacing the weight $w_t$ with a positive, unbiased estimate \cite{fearnhead2008particle,fearnhead2010random,chopin2013smc2}.
For this, we define a generalised Markov transition kernel $\bar k_t: \spc_{t-1} \times \Omega_t \otimes \mathcal{B}(\R_{>0}) \to [0,1]$,  where $\mathcal{B}(\R_{>0})$ is the Borel $\sigma$-algebra on the positive real numbers, that maps $\spc_t$ to the space $\mathcal{P}(\spc_t \times \R_{>0})$ of probability distributions of particles and weights. 
The generalised kernel simultaneously mutates a particle and provides an estimate of the weight $w_t$: for $C_{t-1} \sim \pi_{t-1}$ then
\begin{equation}
    (C_t, W_t) \sim \bar k_t(C_{t-1}, \cdot)
\end{equation}
generates a positive weight $W_t > 0$ such that
\begin{equation}
    \E[W_t \phi(C_t)] = \E[w_t(C_t) \phi(C_t)] = z_t \E_{\pi_t}[\phi]
\end{equation}
for any $\phi \in L^1(\spc_t, \pi_t)$.
In particular, $W_t$ is an unbiased estimate of the (non-random) weight, that is
\begin{equation} \label{eqn:WeightConditionalExp}
    \E[W_t | C_t] = w_t(C_t).
\end{equation}

The random-weight SMC algorithm replaces the weight in Algorithm \ref{alg:smc} by a random estimate, as follows.
\begin{alg}[Random-weight sequential Monte Carlo] \label{alg:rwsmc}
    \normalfont
    Start by initialising an unweighted set of $H$ particles $(\hat C_0^{j, H})_{j \leq H}$ whose elements are i.i.d.~samples from $\pi_0$.
    Iterate the following steps for $t=1, 2, \dots, T$:
    \begin{enumerate}
    \item Mutation and correction:
        Given an unweighted set $(\hat C^{j,H}_{t-1})_{j \leq H}$ which approximates $\pi_{t-1}$, mutate and estimate the weight of each particle by drawing from $\bar k_t$ to obtain
        \begin{equation}
    		(C_t^{j, H}, W_t^{j, H}) \sim \bar k_t(\hat C_{t-1}^{j, H}, \cdot)
    	\end{equation}
    	such that the weighted particles $(C_t^{j,H}, W_t^{j,H})$ approximate $\pi_t$.
    \item Selection:
    	Resample the weighted set of particles to obtain an unweighted set
    	\begin{equation}
    		(C_t^{j, H}, W_t^{j, H})_{j \leq H} \longrightarrow (\hat C_t^{j, H})_{j \leq H}
    	\end{equation}
    	using multinomial resampling (which is well defined as the weights are assumed to be positive).
    	The resulting unweighted set $(\hat C_t^{j, H})_{j \leq H}$ approximates $\pi_t$.
    \end{enumerate}
\end{alg}
This algorithm reduces to the non-random Algorithm~\ref{alg:smc} when $W_t = w_t(C_t)$. 

An example of a random-weight algorithm can be found in \cite{fearnhead2008particle} where a random-weight extension of the particle filter of Example \ref{ex:ParticleFilter} is considered.  
The algorithm that originally motivated this work comes from statistical physics~\cite{rohrbach2022multilevel}: in that case, each distribution $\pi_t$ is a grand-canonical Boltzmann-Gibbs distribution, as follows.

\newcommand*{\BPS}{X}
\newcommand*{\SPS}{x}

\begin{examp}[Hard sphere systems] \label{ex:HS}
    Consider a system of hard spheres in a bounded domain $\Lambda \subset \mathbb{R}^3$.  
    There are $N$ large spheres (radius $r_{\mathrm b}$) and $n$ smaller ones (radius $r_{\mathrm s} < r_{\mathrm b}$).
    The number of large spheres fluctuates between $0 \leq N \leq N_{\mathrm{max}}$, their configuration is denoted by
    \begin{equation}
        \BPS \in \mathcal{D} := \bigcup_{N=0}^{N_{\mathrm{max}}} \Lambda^N.
    \end{equation}
    Similarly, the $n$ small spheres have configuration $\SPS \in \mathcal{D}$ and the full system is denoted by $C = (\BPS, \SPS)$.
    The spheres are hard, so none of them can overlap.
    We take $N_{\mathrm{max}}$ larger than the number of small spheres in an optimal packing in $\Lambda$, so that $\mathcal{D}$ covers all possible configurations.
    
    The final (target) distribution is the grand-canonical binary hard sphere mixture \cite{Dijkstra1999-pre} with density
    \begin{equation}
        \pi_T(C) \propto \frac{\lambda_{\mathrm b}^N \lambda_{\mathrm s}^n}{N! n!} e^{- U_T(C)},
    \end{equation}
    where $\lambda_{\mathrm b}, \lambda_{\mathrm s} \in (0,\infty)$ are parameters; the energy $U_T(C) = \infty$ if any spheres overlap, and $U_T(C) = 0$ otherwise.  
    If the spheres were non-interacting ($U_T \equiv 0$), their numbers $N,n$ would follow independent (truncated) Poisson distributions.
    
    For small size ratios $q = r_{\mathrm s} / r_{\mathrm b} < 1$, the scale separation between large and small spheres makes it is very challenging to sample from $\pi_T$.  
    Physically, our primary interest is in the marginal distribution of the big spheres, $\pi_T(\BPS)  =  \int_{\mathcal D} \pi_T(\BPS, \SPS) \, \id \SPS$, which means that
    \begin{equation}
        \pi_T(\BPS) 
            \propto \frac{\lambda_{\mathrm b}^N}{N!} \Xi_{\lambda_\mathrm{s}}(\BPS)
    \end{equation}
    where 
    \begin{equation}
    \Xi_{\lambda}(\BPS) =  \int_{\mathcal D} \frac{\lambda^n}{n!} e^{- U_T(\BPS, \SPS)} \, \id \SPS
            \label{eqn:ExHSPartFunc}
    \end{equation}
    is called the partition function of the small spheres.
    This integral is intractable, except for the trivial regime of very small $\lambda_{\mathrm s}$.
    However, a positive, unbiased estimate of $\Xi_{\lambda}(\BPS)$  is available through Annealed Importance Sampling (AIS) \cite{jarzynski1997nonequilibrium,crooks2000path,neal2001annealed}, by simulating a process that slowly introduces the small spheres to the system, starting from a system with $n=0$.  This reintroduction of fine details to a physical system is known as reverse mapping \cite{Spyriouni2007,kobayashi2019correction}.
    The AIS procedure enables the development of an RMSMC method that efficiently samples $\pi_T$.
    
    The method starts from an approximate (coarse-grained, CG) model of the large spheres alone
    \begin{equation}
        \pi_0(\BPS) \propto \frac{\lambda_{\mathrm b}^N}{N!} e^{- U_0(\BPS)}
    \end{equation}
    which is designed to mimic the marginal distribution of interest $\pi_0(\BPS) \approx \pi_T(\BPS)$.
    Such models are available in this case \cite{RED,Ashton2011depletion,kobayashi2021critical}, and they do not suffer from
separation of length scales, so samples $\BPS \sim \pi_0$ are easily generated by MCMC.
    Estimates with respect to $\pi_T$ might then be computed by importance sampling with weight $w(\BPS) = \pi_T(\BPS) / \pi_0(\BPS)$
    but this $w(\BPS)$ is intractable, as discussed above.  Hence we use AIS to estimate these weights.
    However, this process is very expensive and it is efficient to split the annealing into multiple steps, within the framework of random-weight SMC.
    
    The method starts with the CG model which does not contain any small spheres, its density is $\pi_0(C) = \pi_0(\BPS) \delta_{n=0}(\SPS)$.
    We construct a sequence of intermediate distributions $\pi_t(C)$, $t=1, \dots, T$,  that include small spheres in increasingly large regions around the big spheres $\BPS$, until small spheres are inserted everywhere in $\Lambda$ at $t=T$, giving the target distribution.
    The information provided by the partially inserted small spheres $\SPS$ is used to design $\pi_t(C)$ such that their marginals $\pi_t(\BPS)$ become increasingly accurate approximations of the target marginal $\pi_T(\BPS)$, details are given in \cite{rohrbach2022multilevel}.
    
    To apply Algorithm \ref{alg:rwsmc} to this model hierarchy, 
    the SMC algorithm is initialised with samples from the CG model
    \begin{equation}
        \hat C_0^{j,H} \sim \pi_0(C), \;\;\; j=1, \dots, H.
    \end{equation}
    Given configurations $\hat C_{t-1}^{j,H}$ at step $t-1$, AIS is used to insert small spheres in the regions that are additionally populated in $\pi_t(C)$ (compared to $\pi_{t-1}(C)$), while keeping the big spheres $\BPS_{t-1}^{j,H}$ fixed.
    This results in a new configuration of small spheres $\SPS_t^{j,H}$, and an estimated ratio of small sphere partition functions which is used to compute the appropriate importance weight $W_t^{j,H}$.
    We summarise this process by a Markov kernel $\bar k_t$:
    \begin{equation}
        (\SPS_t^{j,H}, W_t^{j,H}) \sim \bar k_t(\hat C^{j,H}_{t-1}, \cdot).
    \end{equation}
    The updated population of weighted configurations is $(C_t^{j, H}, W_t^{j,H})$ with $C_t^{j,H} = (\BPS_{t-1}^{j,H}, \SPS_t^{j,H})$.
    
    In a second step, we resample the weighted set of configurations to get $\hat C_t^{j, H}$.
    Finally at $t=T$, this process yields a set of weighted configurations $(C_T^{j, H}, W_T^{j, H})$, $j=1,\dots,H$, that approximates the target distribution
    \begin{equation}
        \sum_{j=1}^{H} \frac{W^{j,H}_T}{\sum_{k=1}^{H} W_T^{k,H}} \phi(\BPS^{j,H}_T) \approx E_{\pi_T}[\phi]
    \end{equation}
    for any measurable function of the big spheres $\phi: \mathcal{D} \to \R$.
    
    This application of Algorithm \ref{alg:rwsmc} to the sequence of distributions $\pi_t(C)$ is an example of the reverse-mapping SMC method; a detailed discussion of this example with one intermediate level ($T=2$) can be found in~\cite{rohrbach2022multilevel}.
\end{examp}

\subsection{Notation conventions}

To simplify the presentation of the following sections, we define the following notations.
We write expectations of a function $\psi$ of the random variables $(C_t, W_t) \sim \bar k_t(c_{t-1}, \cdot)$ for a fixed $c_{t-1} \in \spc_{t-1}$ as
\begin{equation}
    \E_{\bar k_t(c_{t-1}, \cdot)}[\psi(C_t, W_t)]
\end{equation}
and the corresponding expectation with respect to $C_{t-1} \sim \pi_{t-1}$ as
\begin{equation}
    \E_{\tilde \pi_t}[\psi(C_t, W_t)] := \E_{\pi_{t-1}}\left[\E_{\bar k_t(C_{t-1}, \cdot)}[\psi(C_t, W_t)]\right],
\end{equation}
by analogy with expectations with respect to $\tilde \pi_t$ in the non-random-weight case.
In addition, we define the normalised weight random variable $V_t = W_t/z_t$ with the normalising constant from \eqref{eqn:defuweight}.
When it is clear from context, $c_t$ denotes an element from $\mathcal{C}_t$ and we use the notation $c_t \mapsto f(c_t)$ as a shorthand definition for the function $f: \mathcal{C}_t \to \R$.

In a numerical simulation, one would usually fix some $H$ and run the algorithm described above.
To analyse the convergence of the method as $H \to \infty$, we imagine a sequence of such computations, with increasing $H$.
The results of all such computations are collected into triangular arrays of random variables
\begin{equation}
	(\mathbf C_t, \mathbf W_t) 
	    = \left\{ \left(C_t^{j,H}, W_t^{j,H}\right)_{1 \leq j \leq H} \right\}_{H \in \N}
    \;\;\; \text{and} \;\;\;
    \hat{\mathbf C}_t = \left\{ \left( \hat C_t^{j,H} \right)_{1 \leq j \leq H} \right\}_{H \in \N}.
\end{equation}
In the following, we will use boldface letters for such triangular arrays.
To specify the relation of random variables for different rows $H$, we make the natural assumption that random variables on different rows of the triangular array are independent of each other.
This means that all elements of the initial array $\hat{\mathbf C}_0$ are sampled independently and that the steps of the algorithm are computed independently on each row.
This is consistent with the way one would compute an SMC estimate in practice, where it is necessary to rerun the algorithm with a new, bigger initial set of particles to increase the number of particles on level $t>1$.

\section{Convergence results}\label{sec:alternatives}
The resampling step of the SMC algorithm introduces inter-particle dependence in the system, which means that standard Monte Carlo convergence arguments are not applicable.
However, there are established approaches for proving convergence results for SMC methods~\cite{chopin2004central,cappe2005inference,douc2008limit}, which make use of the sequential nature of the algorithm.
When conditioning on the system before resampling, the resampled particles are conditionally independent.
This allows  convergence results to be formulated iteratively, using the conditional independence and the convergence from the previous step to establish convergence for the current step.

In this section, we apply this idea to the random-weight SMC Algorithm \ref{alg:rwsmc}.
In a first step, we show the convergence in probability and a central limit theorem under minor prerequisites for the weights and quantities of interest (see Theorems~\ref{thm:convinprob} and~\ref{thm:clt}).
Using this CLT, we analyse the effect that the randomness of the weights has on the SMC algorithm.
In a second step, we also prove almost sure convergence of the algorithm (Theorem~\ref{thm:asconvergence}).
In the context of applications, the important properties of the SMC method are (i) that it provides a consistent estimate in the large-number-of-particles limit $H \to \infty$ (as shown by its convergence in probability) and (ii) that we understand the magnitude of random fluctuations of an estimate at a fixed number of particles $H$ (which, at least asymptotically, can be obtained by the CLT).
We are not aware of practical applications where a.s.\ convergence is explicitly required -- the motivation for this part of our analysis is more theoretical, because of its implications for the proofs of the CLT in~\cite{chopin2004central}.
An important feature in this regard is that the proof of a.s.~convergence (Theorem~\ref{thm:asconvergence}) requires much more restrictive prerequisites than Theorems~\ref{thm:convinprob} and~\ref{thm:clt} -- this is in line with other previous results~\cite{crisan2000convergence,crisan2002survey}.

All theorems presented in this section are of course also valid results for the special case of the non-random-weight SMC Algorithm \ref{alg:smc} by setting $W_t = w_t(C_t)$.
The proofs are collected in Appendix \ref{sec:proofs}.

\subsection{Central limit theorem}\label{subsec:CLT}
To prove a weak law of large numbers and a central limit theorem, we apply the results from \cite{cappe2005inference} for hidden Markov models and \cite{douc2008limit} for general sequences of weighted samples to our Algorithm \ref{alg:rwsmc}.
The proofs presented there are for SMC with non-random weights.
However, the main ingredients of the proofs, which are the sequential structure of the algorithm and the conditional independence given the previous step, are not affected by the randomisation of the weights, so the application of the techniques of \cite{cappe2005inference,douc2008limit} to Algorithm~\ref{alg:rwsmc} is quite straightforward.

The applicability of this methodology confirms the observation of~\cite{fearnhead2008particle} that the generalisation from non-random- to random-weight SMC methods should be simple to accomplish, since it can also be achieved by augmenting the state variables of the particles.
(That work proved a CLT for a random-weight particle filtering algorithm, building on the CLT of \cite{chopin2004central}.
As discussed above, the following alternative proof of this CLT provides a stronger theoretical foundation for their calculation, because of the problems with the proofs of~\cite{chopin2004central} that we identify in Section~\ref{sec:failureCLT}.

We begin by showing that the random-weight SMC algorithm is weakly consistent at any step $t = 1, \dots, T$, whenever the relevant expectations exist.
\begin{thm}[Convergence in Probability]\label{thm:convinprob}
	For any $\phi_t \in L^1(\spc_t, \pi_t)$, the SMC Algorithm \ref{alg:rwsmc} converges in probability on level $t$ for the weighted particles before resampling
	\begin{equation}
		\sum_{j=1}^{H} \frac{W_t^{j, H}}{\sum_{k=1}^{H} W_t^{k, H}} \phi_t(C_t^{j, H})
		\Pto \E_{\pi_t}[\phi_t] \label{eqn:convpratio}
	\end{equation}
	and for the unweighted particles after resampling
	\begin{equation}
		\frac{1}{H} \sum_{j=1}^{H} \phi_t(\hat C_t^{j, H} ) \Pto \E_{\pi_t}[\phi_t]. \label{eqn:convpresampled}
	\end{equation}
\end{thm}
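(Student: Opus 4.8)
The plan is to argue by induction on the level $t$, propagating the statement that the empirical measure at step $t$ is \emph{consistent}: for every $\psi\in L^1(\spc_t,\pi_t)$ the associated estimator converges in probability to $\E_{\pi_t}[\psi]$. The two displays are then the weighted and the unweighted instances of this property. Throughout I would work with the normalised weights $V_t=W_t/z_t$, for which $\E[V_t\mid C_t]=v_t(C_t)=\frac{\id\pi_t}{\id\tilde\pi_t}(C_t)$; the unknown constant $z_t$ cancels in the ratio \eqref{eqn:convpratio}, so nothing is lost. The base case $t=0$ is immediate, since the $\hat C_0^{j,H}$ are i.i.d.\ samples from $\pi_0$ and the claim is the classical weak law of large numbers. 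For the inductive step I would treat the mutation--correction and the selection substeps in turn, each time conditioning on the array produced by the previous substep and exploiting the resulting conditional independence.

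Mutation and correction: conditioning on $\hat{\mathbf C}_{t-1}$ makes the pairs $(C_t^{j,H},V_t^{j,H})\sim\bar k_t(\hat C_{t-1}^{j,H},\cdot)$ independent across $j$, though not identically distributed. Setting $g_\phi(c_{t-1}):=\E_{\bar k_t(c_{t-1},\cdot)}[V_t\,\phi(C_t)]$ and using the unbiasedness \eqref{eqn:WeightConditionalExp}, the conditional mean of the numerator is
\[
\E\Big[\tfrac1H\textstyle\sum_{j} V_t^{j,H}\phi(C_t^{j,H})\ \Big|\ \hat{\mathbf C}_{t-1}\Big]
=\tfrac1H\textstyle\sum_{j} g_\phi(\hat C_{t-1}^{j,H}).
\]
A short computation gives $\E_{\pi_{t-1}}[|g_\phi|]\le\E_{\pi_t}[|\phi|]<\infty$, so $g_\phi\in L^1(\pi_{t-1})$ and the induction hypothesis sends this conditional mean to $\E_{\pi_{t-1}}[g_\phi]=\E_{\tilde\pi_t}[v_t\phi]=\E_{\pi_t}[\phi]$; taking $\phi\equiv1$ sends the denominator to $1$. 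Provided the fluctuation of each average about its conditional mean vanishes in probability, Slutsky's theorem then delivers \eqref{eqn:convpratio}.

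Selection: conditioning now on the weighted array $(\mathbf C_t,\mathbf W_t)$ makes the resampled particles $\hat C_t^{j,H}$ conditionally i.i.d.\ draws from the weighted empirical distribution, so their conditional mean is \emph{exactly} the weighted estimator of \eqref{eqn:convpratio}, which the previous substep has already shown converges to $\E_{\pi_t}[\phi]$. Hence \eqref{eqn:convpresampled} follows once more from the vanishing of the conditional fluctuation, closing the induction.

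The step I expect to be the main obstacle is precisely this control of the fluctuation of a conditionally independent triangular array about its conditional mean under only an $L^1$ hypothesis, where no conditional variance is available. I would handle it by truncation: split each summand at a fixed threshold $a$, bound the truncated part through its (now finite) conditional variance, which is $O(1/H)$, and bound the remainder by its conditional $L^1$ mass, which is uniformly small because $\E_{\pi_t}[|\phi|\,\mathds 1_{\{|\phi|>a\}}]\to0$ as $a\to\infty$; sending $H\to\infty$ first and then $a\to\infty$ eliminates both. This is exactly the conditional weak law for weighted samples established in \cite{douc2008limit}, so the cleanest route is to verify its hypotheses at each substep rather than to reprove it.
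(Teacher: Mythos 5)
Your proposal is correct and follows essentially the same route as the paper's proof: induction over levels with two substeps (mutation--correction, then selection), conditioning on the previous array to obtain conditional independence, applying the induction hypothesis to the conditional means (via $g_\phi \in L^1(\pi_{t-1})$ and the unit function for the denominator), and controlling the conditional fluctuations by a conditional weak law of large numbers for triangular arrays whose Lindeberg-type condition is verified by exactly the truncation-plus-dominated-convergence argument you sketch. The only cosmetic difference is that the paper invokes this conditional WLLN in the form of \cite[Prop.~9.5.7]{cappe2005inference} (its Theorem~\ref{thm:condwlln}) rather than \cite{douc2008limit}, both of which the paper cites as the basis of its argument.
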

In addition, we show the following CLT.
\begin{thm}[Central Limit Theorem]\label{thm:clt}
	Let
	\begin{alignat*}{2}
        A_0 &= L^2(\spc_0, \pi_0), \numberthis \label{eqn:assumpat0} \\
		A_l &= \big\{ \phi_l \in L^2(\spc_l, \pi_l) \, &&\big | \,
		c_{l-1} \mapsto \E_{\bar k_l(c_{l-1}, \cdot)}[V_l \phi_l(C_l)] \in A_{l-1},\\
		& &&\phantom{\, \big|} \E_{\tilde \pi_{l}}[|V_l \phi_l(C_l)|^2] < \infty \big\}, \numberthis \label{eqn:assumpat}
	\end{alignat*}
	and assume that the constant function $c_l \mapsto 1 \in A_l$ for all $l \leq t$.
	Then for any $\phi_t \in A_{t}$, the following central limit theorem for the SMC Algorithm \ref{alg:rwsmc} on level $t$ holds
	\begin{align}
		\sqrt{H} \left\{ \sum_{j=1}^{H} \frac{W_t^{j, H}}{\sum_{k=1}^{H} W_t^{k, H}} \phi_t(C_t^{j, H})
		- \E_{\pi_t}[\phi_t] \right \}
		&\Dto \mathcal{N}(0, \Sigma_t(\phi_t)), \label{eqn:clt1} \\
		\frac{1}{\sqrt H} \sum_{j=1}^{H} \left\{ \phi_t(\hat C_t^{j, H}) - \E_{\pi_t}[\phi_t]\right\}
		&\Dto \mathcal{N}(0, \hat{\Sigma}_t(\phi_t))\label{eqn:clt2}
	\end{align}
	with asymptotic variances defined by the following recursion formula:
	let $\hat \Sigma_0(\phi) = \var_{\pi_0}(\phi)$ and for $t > 0$
    \begin{align}
    	\tilde \Sigma_t(\phi) &= \hat \Sigma_{t-1}(\E_{\bar k_t}[\phi]) + \E_{\pi_{t-1}}[\var_{\bar k_t}(\phi)], 
    	\nonumber\\
    	\Sigma_t(\phi) &= \tilde \Sigma_t(V_t (\phi - \E_{\pi_t}[\phi])),\\
    	\hat \Sigma_t(\phi) &= \Sigma_t(\phi) + \var_{\pi_t}(\phi),
    	\nonumber
    \end{align}
    where $\E_{\bar k_t}[\phi] = c_{t-1} \mapsto \E_{\bar k_t(c_{t-1}, \cdot)}[\phi(C_t)]$ and $\var_{\bar k_t}(\phi) = c_{t-1} \mapsto \var_{\bar k_t(c_{t-1}, \cdot)}(\phi(C_t))$.
\end{thm}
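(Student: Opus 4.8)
The plan is to prove the two central limit theorems \eqref{eqn:clt1} and \eqref{eqn:clt2} simultaneously by induction on the level $t$, following the strategy of \cite{cappe2005inference,douc2008limit}: one carries a pair of statements at each level (a CLT for the weighted sample before resampling and a CLT for the unweighted sample after resampling) and shows that the three elementary operations of Algorithm~\ref{alg:rwsmc} --- mutation--correction, self-normalisation, and multinomial resampling --- each transform an asymptotically normal sample into another one, updating the variance according to the stated recursion. Throughout I would assume Theorem~\ref{thm:convinprob}, established beforehand by a parallel but simpler induction, since the weak law of large numbers is needed to identify the limits of the random conditional variances that appear below. The base case $t=0$ is immediate: the particles $\hat C_0^{j,H}$ are i.i.d.\ from $\pi_0$, so the ordinary CLT gives \eqref{eqn:clt2} at $t=0$ with $\hat\Sigma_0(\phi)=\var_{\pi_0}(\phi)$.

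For the inductive step, fix $\phi_t\in A_t$, write $m=\E_{\pi_t}[\phi_t]$, and condition on the $\sigma$-algebra $\mathcal{G}_{t-1}^H$ generated by $(\hat C_{t-1}^{j,H})_{j\le H}$. The mutation and correction step $(C_t^{j,H},W_t^{j,H})\sim\bar k_t(\hat C_{t-1}^{j,H},\cdot)$ produces draws that are conditionally independent across $j$, and for a test function $\psi$ of $(C_t,W_t)$ one decomposes
\begin{equation*}
\frac{1}{\sqrt H}\sum_{j=1}^H\bigl\{\psi(C_t^{j,H},W_t^{j,H})-\E_{\tilde\pi_t}[\psi]\bigr\}
= \frac{1}{\sqrt H}\sum_{j=1}^H\bigl\{\psi(C_t^{j,H},W_t^{j,H})-g(\hat C_{t-1}^{j,H})\bigr\}
+ \frac{1}{\sqrt H}\sum_{j=1}^H\bigl\{g(\hat C_{t-1}^{j,H})-\E_{\pi_{t-1}}[g]\bigr\},
\end{equation*}
where $g(c_{t-1})=\E_{\bar k_t(c_{t-1},\cdot)}[\psi]$. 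The second sum is $\mathcal{G}_{t-1}^H$-measurable and, by the inductive CLT \eqref{eqn:clt2} at level $t-1$ applied to $g$, converges to $\mathcal{N}(0,\hat\Sigma_{t-1}(g))$ provided $g\in A_{t-1}$. The first sum is, conditionally on $\mathcal{G}_{t-1}^H$, a sum of independent mean-zero terms whose conditional variance $\frac1H\sum_j\var_{\bar k_t(\hat C_{t-1}^{j,H},\cdot)}(\psi)$ converges in probability, by the weak LLN \eqref{eqn:convpresampled} applied to $c_{t-1}\mapsto\var_{\bar k_t(c_{t-1},\cdot)}(\psi)$, to $\E_{\pi_{t-1}}[\var_{\bar k_t}(\psi)]$. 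Combining the two contributions yields a CLT with variance $\tilde\Sigma_t(\psi)=\hat\Sigma_{t-1}(g)+\E_{\pi_{t-1}}[\var_{\bar k_t}(\psi)]$.

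Statement \eqref{eqn:clt1} then follows by a ratio (delta-method) argument. Writing the self-normalised estimator as $a_H/b_H$ with $a_H=\frac1H\sum_jV_t^{j,H}\phi_t(C_t^{j,H})$ and $b_H=\frac1H\sum_jV_t^{j,H}$, the unbiasedness relation \eqref{eqn:WeightConditionalExp} gives $\E_{\tilde\pi_t}[V_t\phi_t]=m$ and $\E_{\tilde\pi_t}[V_t]=1$, so $a_H\Pto m$ and $b_H\Pto1$, and $\sqrt H(a_H/b_H-m)=b_H^{-1}\sqrt H(a_H-mb_H)$. Since $\sqrt H(a_H-mb_H)=\frac1{\sqrt H}\sum_j\psi(C_t^{j,H},W_t^{j,H})$ with $\psi=V_t(\phi_t-m)$, which has $\E_{\tilde\pi_t}[\psi]=0$, the previous paragraph with this $\psi$ together with Slutsky gives \eqref{eqn:clt1} with $\Sigma_t(\phi_t)=\tilde\Sigma_t\bigl(V_t(\phi_t-m)\bigr)$. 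The final operation is multinomial resampling: conditionally on the weighted sample the $\hat C_t^{j,H}$ are i.i.d.\ draws from the empirical distribution, so one again splits $\frac1{\sqrt H}\sum_j\{\phi_t(\hat C_t^{j,H})-m\}$ into a conditionally i.i.d.\ mean-zero term, whose conditional variance tends to $\var_{\pi_t}(\phi_t)$ by \eqref{eqn:convpratio} applied to $\phi_t^2$, and the measurable term $\sqrt H\{a_H/b_H-m\}$ already handled. This produces \eqref{eqn:clt2} with $\hat\Sigma_t(\phi_t)=\Sigma_t(\phi_t)+\var_{\pi_t}(\phi_t)$, closing the induction.

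The workhorse used at each of the three steps is a single lemma: if $U_H$ is $\mathcal{G}_H$-measurable with $U_H\Dto\mathcal{N}(0,\sigma_1^2)$, and $V_H=\frac1{\sqrt H}\sum_jZ^{j,H}$ with the $Z^{j,H}$ conditionally independent given $\mathcal{G}_H$, mean zero, conditional variance tending in probability to $\sigma_2^2$, and satisfying a conditional Lindeberg condition, then $U_H+V_H\Dto\mathcal{N}(0,\sigma_1^2+\sigma_2^2)$; the proof is a characteristic-function computation, conditioning on $\mathcal{G}_H$ and using $\E[e^{\mathrm{i}sV_H}\mid\mathcal{G}_H]\to e^{-s^2\sigma_2^2/2}$ in probability followed by dominated convergence. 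I expect the main obstacle to be the verification of the conditional Lindeberg conditions for the non-identically-distributed triangular arrays arising in the mutation and resampling steps; this is precisely where the square-integrability built into the sets $A_l$ (namely $\E_{\tilde\pi_l}[|V_l\phi_l|^2]<\infty$, together with $\E_{\tilde\pi_l}[V_l^2]<\infty$ coming from the hypothesis $1\in A_l$, which also controls the denominator $b_H$) enters, bounding the truncated second moments via a truncation plus uniform-integrability argument combined with the weak LLN. Finally, the recursive membership condition $c_{l-1}\mapsto\E_{\bar k_l(c_{l-1},\cdot)}[V_l\phi_l]\in A_{l-1}$, together with linearity of $A_{l-1}$ and $1\in A_l$, is exactly what guarantees that the propagated function $g=\E_{\bar k_t}[V_t(\phi_t-m)]$ again lies in the admissible class, so that the inductive hypothesis may be reapplied and the induction genuinely closes.
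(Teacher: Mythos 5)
Your proposal is correct and follows essentially the same route as the paper's proof: induction over levels carrying the pair of CLTs, conditioning on the previous step to split each estimator into a conditional-expectation part (handled by the inductive hypothesis) and a conditional-fluctuation part (handled by a conditional CLT for triangular arrays, verified via a truncation-plus-weak-LLN Lindeberg argument), followed by a Slutsky/delta-method step for the self-normalised ratio and an analogous treatment of multinomial resampling. Your ``workhorse lemma'' is precisely the combination of the paper's Theorem~\ref{thm:condclt} with the characteristic-function product argument used at the end of Lemma~\ref{lemma:cltproofpart1}, so the two proofs coincide in all essentials.
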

Theorem \ref{thm:clt} strongly resembles \cite[Theorem 1]{chopin2004central}, but now generalised to random weights.
Note however that the conditions in \eqref{eqn:assumpat} only require second moments (and not moments of order $2+\delta$ as in \cite{chopin2004central}); this is consistent with the results from \cite{cappe2005inference}.
Similarly, \cite{fearnhead2008particle} considers a random weight particle filter applied to partially observed diffusions and provides a CLT based on the results of \cite{chopin2004central}.

In Theorem \ref{thm:clt}, we describe the asymptotic variance of the random weight SMC method by a recursive formula. 
While this definition clearly illustrates the effect of each step of the SMC method on the variance, the recursive form can be inconvenient to work with.
As shown in \cite{chopin2004central}, we can reformulate this variance into a direct expression.
For this, we define the semigroup $Q$ whose action performs an averaged reweighting at each level of the algorithm which is defined by
\begin{equation}
    Q_t[\phi](C_{t-1}) = \E_{\bar k_t(C_{t-1}, \cdot)}[V_t \phi(C_t)], 
\end{equation}
and $Q_{l+1:t}[\phi] = Q_{l+1} \circ \cdots \circ Q_{t}[\phi]$ for $l+1 \leq t$, and $Q_{t+1:t}[\phi] = \phi$.
Then, we can denote the variance of the SMC algorithm at step $t$ by
\begin{equation} \label{eqn:SMCVarianceSum}
    \Sigma_t(\phi) = \sum_{l=1}^{t} \E_{\tilde \pi_{l}} \left[ \left( V_l Q_{l+1 : t}[\phi - \E_{\pi_t}[\phi]](C_l)  \right)^2 \right].
\end{equation}

\subsection{Effect of the randomness of the weights and the resampling step on the asymptotic variance} \label{subsec:ImpactOfRandomness}
The description of the variance in \eqref{eqn:SMCVarianceSum} also allows us to analyse the effect that the randomness of the weights has on the asymptotic variance of the SMC estimator.
As the averaging performed by the action of $Q$ includes the randomness of the weights, we can use the unbiasedness of the weights \eqref{eqn:WeightConditionalExp} to show that
\begin{align}
    Q_t[\phi](C_{t-1}) & = \E_{\bar k_t(C_{t-1}, \cdot)}\big[ \E[V_t \mid C_t] \phi(C_t) \big]
    \nonumber\\ &
            = \E_{\bar k_t(C_{t-1}, \cdot)}[v_t(C_t) \phi(C_t)].
\end{align}
The semigroup $Q$ is therefore the same as in the non-random-weight case.
The randomness enters the variance only through the weights $V_l, l=1, \dots, t$ that appear in each of the expectations in the direct expression \eqref{eqn:SMCVarianceSum}.
For each summand, we can apply Jensen's inequality for conditional expectations and use the unbiasedness of the weights \eqref{eqn:WeightConditionalExp} to show that
\begin{align}
    \E_{\tilde \pi_{l}} \left[ \left( 
                V_l Q_{l+1 : t}[\phi - \E_{\pi_t}[\phi]](C_l) 
            \right)^2 \right]
        &\geq \E_{\tilde \pi_{l}} \left[ \left( 
                \E[V_l \mid C_l] Q_{l+1 : t}[\phi - \E_{\pi_t}[\phi]](C_l) 
            \right)^2 \right] \nonumber \\
        &=  \E_{\tilde \pi_{l}} \left[ \left( 
                v_l(C_l) Q_{l+1 : t}[\phi - \E_{\pi_t}[\phi]](C_l) 
            \right)^2 \right].
\end{align}
The randomisation of the weights can only increase the asymptotic variance of the SMC estimator.
However, due to the resampling step in Algorithm \ref{alg:rwsmc}, the variance $\Sigma_t$ does not contain expectations of products of random weights.

For comparison, we can consider an estimator which follows Algorithm \ref{alg:rwsmc} but omits resampling, instead multiplying the weights each step.
This is known as sequential importance sampling (SIS) \cite{doucet2009tutorial}.
The resulting estimator has asymptotic variance
\begin{equation} \label{eqn:SISVariance}
    \Sigma_t^{\text{SIS}}(\phi) = 
        \E_{\pi_0} \E_{\bar k_1} \cdots \E_{\bar k_t}\left[ \big(V_1 V_2 \cdots V_t \left( \phi(C_t) - \E_{\pi_t}[\phi]\right) \big)^2 \right],
\end{equation}
see \cite[Section 3]{chopin2004central}.
The variance of the product of random weights in \eqref{eqn:SISVariance} can potentially grow exponentially in the number of steps $t$.
The resampling step of the random-weight SMC algorithm mitigates this accumulation of noise in the importance weights.

The exact impact of the randomisation on the performance of SMC will depend on the specific application.
We now present a simple example of this operation which illustrates the usefulness of SMC in the statistical physics Example~\ref{ex:HS}.

\begin{figure}
    \centering
    \includegraphics[width=\columnwidth]{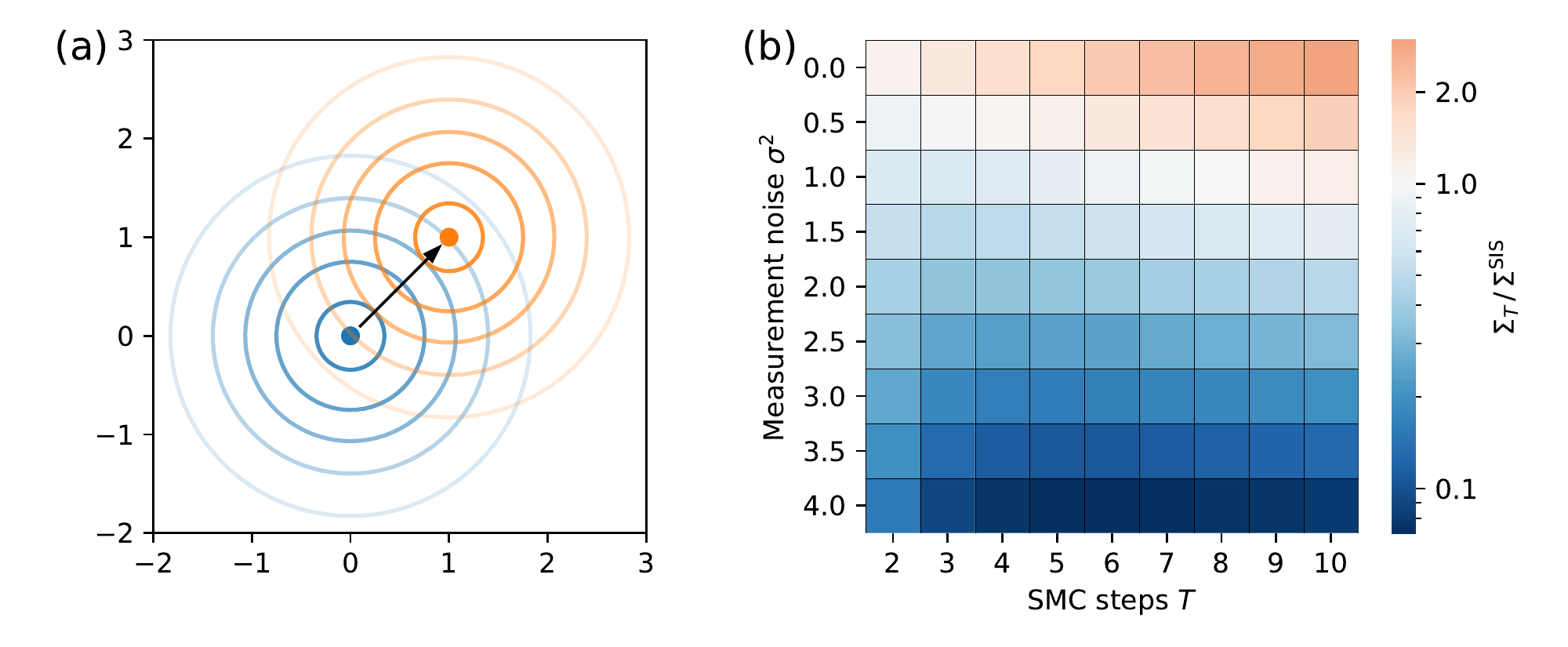}
    \caption{
        Simple random-weight SMC algorithm of Example \ref{ex:RandomWeight}.
        (a) The start and end distribution of the SMC process.
        (b) The quotient $\Sigma_T / \Sigma^{\text{SIS}}$ of the asymptotic variances of an SMC method with $T$ steps and the corresponding sequential importance sampling method without resampling.
        This value is computed for varying number of steps $T$ and variance of the noise of the measured weights $\sigma^2$.
    }
    \label{fig:RandomWeight}
\end{figure}

\begin{examp}  \label{ex:RandomWeight}
    As a very simple example, consider a sequence of $T+1$ unit Gaussian distributions on $\mathbb{R}^2$, that is
    \begin{equation} 
        \pi_t = \mathcal{N}(\mu_t, I) 
    \end{equation}
    with equally-spaced means between (0,0) and (1,1)
    \begin{equation}
        \mu_t = (t/T, t/T),
    \end{equation}
    for $ t=0, \dots, T$, see Figure~\ref{fig:RandomWeight}(a).
    As a quantity of interest, we measure the first component of the $2d$ vector.

    To make the connection to Example~\ref{ex:HS}, we assume that importance weights between steps $w_t = \id \pi_{t} / \id \pi_{t-1}$ are not directly available and need to be estimated.
    In the statistical physics context~\cite{rohrbach2022multilevel}, the analogues of these weights are ratios of partition functions computed by AIS, which leads to unbiased estimates with log-normal distributions~\cite{neal2001annealed}.
    As a simple model for this, we consider a kernel $\bar k_t$ that, for $\hat C_{t-1} \sim \pi_{t-t}$, keeps the position of the particle fixed $C_{t} = \hat C_{t-1}$ and ``measures'' a weight with  log-normal noise of mean $1$
    \begin{equation}
        \hat W_t(C_{t}) \sim \frac{\id \pi_{t}}{\id \pi_{t-1}}(C_{t}) \cdot \log\!\mathcal{N}\left(-\frac{\sigma^2}{2T}, \frac{\sigma}{\sqrt{T}}\right).
    \end{equation}
    The noises are all independent; their distributions are chosen such that the overall importance weight $\hat W = \hat W_1 \cdots \hat W_T$ that appears in the variance of SIS \eqref{eqn:SISVariance} is log-normal with log-variance $\sigma^2$, independent of the number of intermediate steps $T$.
    
    An important feature of this example which is also consistent with the statistical physics setting~\cite{rohrbach2022multilevel} is that the kernels $\bar k_t$ only sample weights and do not modify the positions of the particles.
    This implies that if we set the noise of the weights $\sigma$ to $0$, the first term that appears in the asymptotic variance formula $\Sigma_T$ of the SMC estimator in \eqref{eqn:SMCVarianceSum} 
    \begin{equation}
         \E_{\tilde \pi_{1}} \left[ \left( V_1 Q_{2 : T}[\phi - \E_{\pi_T}[\phi]](C_1)  \right)^2 \right]
            = \E_{\pi_0}\left[v_1 \cdots v_T (\phi - \E_{\pi_T}[\phi)\right]
            = \Sigma_T^{\text{SIS}}(\phi) 
    \end{equation}
    is equal to the variance of the SIS estimator from \eqref{eqn:SISVariance}; the remaining summands are due to the additional error introduced by the resampling steps.
    Thus, we have $\Sigma_T \geq \Sigma_T^{\text{SIS}}$ in the no-noise case, and any improvement of SMC over SIS for $\sigma > 0$ is solely due to the noise-reducing effect of resampling discussed in this subsection.
    
    We now compare the asymptotic variance of the SMC estimator $\Sigma_T$ for varying number of steps $T > 1$ against the variance of the sequential importance sampling estimator $\Sigma^{\text{SIS}}$ (which is equivalent to $\Sigma_1$ here), given different levels of weight noise $\sigma^2$.
    In Figure~\ref{fig:RandomWeight}(b), we plot the ratio of the asymptotic variances $\Sigma_T / \Sigma^{\text{SIS}}$.
    (In this simple example, the expectations in (\ref{eqn:SMCVarianceSum}) can be computed directly.)
    
    As predicted, if the weights are exact ($\sigma^2 = 0$), introducing intermediate steps does not improve the performance.
    Instead, each resampling reduces the effective sample size, leading to an increased variance in the SMC method.
    This is unavoidable (the deleterious effects of resampling could be reduced by adopting a more efficient resampling strategy, for example residual resampling \cite{cappe2005inference}, but this is not our main concern here).
    Instead we focus on systems with significant measurement noise $\sigma^2 > 0$.
    In this case, Fig.~\ref{fig:RandomWeight}(b) shows that the SMC method does outperform the SIS base case, for the reasons discussed above (the asymptotic variance of SMC is lower than $\Sigma^{\text{SIS}}$ because it does not include expectations of products of random weights).
\end{examp}

In practical applications, the distributions might be high dimensional, non-Gaussian, and the noise of the weights not exactly log-Gaussian.
Nevertheless, we have observed a variance reduction similar to Example~\ref{ex:RandomWeight} in an application of RMSMC to the binary hard sphere mixture of Example~\ref{ex:HS} in \cite{rohrbach2022multilevel}.
There, the introduction of one intermediate level while keeping all other parameters fixed has led to a substantially reduced variance of the corresponding estimator.
This was achieved by careful design of the intermediate level, informed by prior understanding of the physics.
In this case, it is not feasible to generalise the approach to $T>2$, contrary to the simple Example~\ref{ex:RandomWeight}, so the large values of $T$ considered in that example are not accessible.  Still, the method improves significantly on the corresponding SIS base case \cite{rohrbach2022multilevel}.

\subsection{Almost sure convergence} \label{subsec:ASConvergence}
We now turn to the almost sure convergence of the algorithm.
Here, the triangular structure of the SMC algorithm causes additional complexity.
Unlike the convergence in distribution and probability that were investigated in the previous two theorems, almost sure convergence depends on the relation of the random variables on different rows of the triangular array, which are independent for our algorithm.
Thus, the canonical strong law of large numbers is not applicable here, see Section \ref{sec:failure}.
To prove convergence, we make use of a strong law of large numbers for triangular arrays of row-wise independent random variables that has been shown by \cite{hu1989strong}.
\begin{thm}[{{\cite[Corollary 1]{hu1989strong}, Strong Law of Large Numbers}}] \label{thm:slln}
	Let 
	\begin{equation}
	    \mathbf{X} = \{(X^{j, H})_{j \leq H}\}_{H \in \N}
	\end{equation}
	be a triangular array of real-valued random variables where for some $\delta > 0$
	\begin{equation}\label{eqn:sllncondition}
		\sup_{j \leq H, H \in \N}\E\left[ |X^{j, H} - \E[X^{j, H}]|^{2 + \delta}\right] < \infty.
	\end{equation}
	If for every $H \in \N$ the random variables in one row $X^{j, H}, \; j = 1, \dots, H,$ are independent, we have
	\begin{equation}
	    \frac{1}{H}\sum_{j=1}^{H} \left( X^{j, H} - \E[X^{j, H}] \right) \longrightarrow 0 \; \text{a.s.}
	\end{equation}
\end{thm}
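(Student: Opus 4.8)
The plan is to reduce to centered variables and then combine a moment inequality for sums of row-wise independent random variables with the first Borel--Cantelli lemma. First I would set $Y^{j,H} = X^{j,H} - \E[X^{j,H}]$, so that hypothesis \eqref{eqn:sllncondition} reads $M := \sup_{j \leq H,\, H \in \N} \E[|Y^{j,H}|^{2+\delta}] < \infty$, and abbreviate $S_H = \sum_{j=1}^H Y^{j,H}$; the goal is $S_H/H \to 0$ almost surely.

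The core estimate is a bound on $\E[|S_H|^{2+\delta}]$. Since the $Y^{j,H}$ in a fixed row are independent and centered, Rosenthal's inequality (valid for the exponent $p = 2+\delta \geq 2$) provides a constant $C_\delta$ with
\[
    \E[|S_H|^{2+\delta}] \leq C_\delta \left[ \sum_{j=1}^H \E[|Y^{j,H}|^{2+\delta}] + \left( \sum_{j=1}^H \E[(Y^{j,H})^2] \right)^{(2+\delta)/2} \right].
\]
I would bound the first sum by $HM$, and use Lyapunov's inequality $\E[(Y^{j,H})^2] \leq M^{2/(2+\delta)}$ to bound the second bracket by $\big(H M^{2/(2+\delta)}\big)^{(2+\delta)/2} = H^{(2+\delta)/2} M$. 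Since $(2+\delta)/2 \geq 1$, both contributions are dominated by $H^{(2+\delta)/2}$, giving $\E[|S_H|^{2+\delta}] \leq C\, H^{(2+\delta)/2}$ for a constant $C = C(\delta, M)$.

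It then remains to apply Markov's inequality at order $2+\delta$: for any $\varepsilon > 0$,
\[
    \mathbb{P}\!\left( \frac{|S_H|}{H} > \varepsilon \right) \leq \frac{\E[|S_H|^{2+\delta}]}{\varepsilon^{2+\delta} H^{2+\delta}} \leq \frac{C}{\varepsilon^{2+\delta}}\, H^{-(1+\delta/2)}.
\]
Because $\delta > 0$ the exponent $1 + \delta/2$ exceeds $1$, so $\sum_{H \in \N} H^{-(1+\delta/2)} < \infty$. The first Borel--Cantelli lemma then gives $\mathbb{P}(|S_H|/H > \varepsilon \text{ i.o.}) = 0$ for each fixed $\varepsilon$; intersecting the resulting full-measure events over $\varepsilon = 1/m$, $m \in \N$, yields $\limsup_H |S_H|/H = 0$ almost surely, which is the claim.

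I expect the crux to be the moment-inequality step, since everything hinges on controlling $\E[|S_H|^{2+\delta}]$ by $H^{(2+\delta)/2}$: it is precisely this power, together with the division by $H^{2+\delta}$, that produces the summable tail, and it is the strict positivity $\delta > 0$ (entering through convergence of $\sum_H H^{-1-\delta/2}$) that makes the Borel--Cantelli step succeed --- a uniform second-moment bound alone would leave $\sum_H H^{-1}$, which diverges. I would also note that row-wise independence is used only inside Rosenthal's inequality, whereas the first Borel--Cantelli lemma requires no relation between different rows $H$; this is why the theorem imposes no assumption linking distinct rows of the array.
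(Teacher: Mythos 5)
Your proof is correct. Note that the paper does not prove this statement at all: it is imported verbatim from the cited reference (Hu--M\'oricz--Taylor, Corollary 1), so there is no internal proof to compare against. Your argument --- a Rosenthal (equivalently Marcinkiewicz--Zygmund) moment bound giving $\E[|S_H|^{2+\delta}] \leq C H^{(2+\delta)/2}$, Markov's inequality at order $2+\delta$, and the first Borel--Cantelli lemma --- is exactly the classical ``complete convergence'' route by which such triangular-array strong laws are established in that reference, and it is complete: the exponent arithmetic yielding the summable tail $H^{-(1+\delta/2)}$ is right, and you correctly observe that rowwise independence enters only through the moment inequality while Borel--Cantelli I imposes no condition relating distinct rows, which is precisely why the theorem can dispense with any assumption linking them.
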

Similarly to the previous section, we prove almost sure convergence of the random-weight SMC Algorithm \ref{alg:rwsmc} by iteratively applying Theorem \ref{thm:slln}.
\begin{thm}[Almost Sure Convergence]\label{thm:asconvergence}
	Assume that there exists a $\delta > 0$ such that for each $0 < l \leq t$
	\begin{equation}\label{eqn:thmasreq}
		\sup_{\hat{c}_{l-1}} \E_{\bar k_l(\hat{c}_{l-1}, \cdot)}\left[ |V_l|^{2 + \delta} \right] < \infty
	\end{equation}
	for all reachable configurations $\hat{c}_{l-1} \in \spc_{l-1}$ at level $l-1$.
	Let $\phi_t:\mathcal{C}_t \to \R$ be a bounded, measurable function.
	Then on level $t$, the SMC Algorithm \ref{alg:rwsmc} converges almost surely
	\begin{align}
		\sum_{j=1}^{H} \frac{W_t^{j, H}}{\sum_{k=1}^{H} W_t^{k, H}} \phi_t(C_t^{j, H})
			&\longrightarrow \E_{\pi_t}[\phi_t] \;\text{a.s.} \label{eqn:smcasconv1} \\
		\frac{1}{H} \sum_{j=1}^{H} \phi_t(\hat C_t^{j, H}) &\longrightarrow \E_{\pi_t}[\phi_t] \; \text{a.s.}  \label{eqn:smcasconv2} 
	\end{align}
	as $H \to \infty$.
\end{thm}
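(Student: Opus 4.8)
The plan is to prove both \eqref{eqn:smcasconv1} and \eqref{eqn:smcasconv2} simultaneously by induction on the level $t$, establishing at each level the \emph{stronger} statement that \eqref{eqn:smcasconv2} holds for every bounded measurable function (this strengthening is needed so that the inductive hypothesis can later be applied to auxiliary functions other than $\phi_t$). The base case $t=0$ is immediate: the particles $\hat C_0^{j,H}$ are i.i.d.\ samples from $\pi_0$, and since $\phi_0$ is bounded the moment condition \eqref{eqn:sllncondition} holds trivially, so Theorem \ref{thm:slln} gives $\frac1H\sum_j\phi_0(\hat C_0^{j,H})\to\E_{\pi_0}[\phi_0]$ a.s. For the inductive step I assume \eqref{eqn:smcasconv2} at level $t-1$ for all bounded measurable functions and deduce first \eqref{eqn:smcasconv1} and then \eqref{eqn:smcasconv2} at level $t$.

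To prove \eqref{eqn:smcasconv1} I write the self-normalised estimator, using $W_t=z_tV_t$ so that $z_t$ cancels, as the ratio
\[
\frac{\frac1H\sum_j V_t^{j,H}\phi_t(C_t^{j,H})}{\frac1H\sum_j V_t^{j,H}},
\]
and show that the numerator converges a.s.\ to $\E_{\pi_t}[\phi_t]$ while the denominator (the same expression with $\phi_t\equiv1$) converges a.s.\ to $\E_{\tilde\pi_t}[V_t]=1$; the ratio of the a.s.\ limits then gives the claim. For the numerator I condition on the resampled configuration $\hat{\mathbf C}_{t-1}$ and split
\[
\frac1H\sum_j V_t^{j,H}\phi_t(C_t^{j,H}) = \frac1H\sum_j\big(V_t^{j,H}\phi_t(C_t^{j,H})-g_t(\hat C_{t-1}^{j,H})\big) + \frac1H\sum_j g_t(\hat C_{t-1}^{j,H}),
\]
where $g_t(c)=\E_{\bar k_t(c,\cdot)}[V_t\phi_t(C_t)]$. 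Condition \eqref{eqn:thmasreq} together with the boundedness of $\phi_t$ bounds $g_t$ uniformly (via H\"older, $\sup_c\E_{\bar k_t(c,\cdot)}[|V_t|]<\infty$), so $g_t$ is bounded measurable and the second term converges a.s.\ to $\E_{\pi_{t-1}}[g_t]=\E_{\tilde\pi_t}[V_t\phi_t(C_t)]=\E_{\pi_t}[\phi_t]$ by the inductive hypothesis. The first term is a row-average of variables that, \emph{given} $\hat{\mathbf C}_{t-1}$, are independent with mean zero and whose $(2+\delta)$ moments are bounded uniformly in $j,H$ precisely because of \eqref{eqn:thmasreq}; Theorem \ref{thm:slln} then sends it to zero a.s.

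To pass to the resampled estimator \eqref{eqn:smcasconv2} I use that, conditionally on the weighted population $(\mathbf C_t,\mathbf W_t)$, multinomial resampling produces i.i.d.\ draws $\hat C_t^{j,H}$ from the weighted empirical measure, whose conditional mean is the weighted estimator itself. Hence
\[
\frac1H\sum_j\phi_t(\hat C_t^{j,H}) = \frac1H\sum_j\Big(\phi_t(\hat C_t^{j,H})-\sum_k\tfrac{W_t^{k,H}}{\sum_l W_t^{l,H}}\phi_t(C_t^{k,H})\Big) + \sum_k\tfrac{W_t^{k,H}}{\sum_l W_t^{l,H}}\phi_t(C_t^{k,H}).
\]
The last term is exactly the weighted estimator just shown to converge to $\E_{\pi_t}[\phi_t]$, while the bracketed resampling noise is a row-average of conditionally independent, mean-zero, bounded terms, which vanishes a.s.\ by a further application of Theorem \ref{thm:slln}. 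This closes the induction.

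The main obstacle is that in both steps the summands are only \emph{conditionally} independent (given $\hat{\mathbf C}_{t-1}$, respectively $(\mathbf C_t,\mathbf W_t)$), whereas Theorem \ref{thm:slln} is phrased for genuinely row-independent arrays; beyond level $t=1$ the resampling of the previous step destroys unconditional within-row independence. The point requiring care is therefore the conditional application of Theorem \ref{thm:slln}. This is legitimate because its proof proceeds through a uniform $(2+\delta)$-moment bound and the first Borel--Cantelli lemma, and the latter needs only summability of the tail probabilities $\Pr(|S_H|>\varepsilon)=\E[\Pr(|S_H|>\varepsilon\mid\mathcal F_H)]$ of the relevant centred row-average $S_H$, \emph{not} independence across rows; conditioning on the history $\mathcal F_H$ and invoking the uniform conditional moment bound supplies exactly such a summable Markov/Rosenthal estimate. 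This is also where hypothesis \eqref{eqn:thmasreq} is indispensable: it upgrades the $L^2$ control that sufficed for Theorems \ref{thm:convinprob} and \ref{thm:clt} to the uniform $(2+\delta)$ control needed to make the Borel--Cantelli sum converge, explaining why a.s.\ convergence genuinely demands stronger hypotheses than convergence in probability.
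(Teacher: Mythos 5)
Your proposal is correct and follows essentially the same route as the paper's proof: induction over levels, splitting each estimator into its conditional expectation given the previous step (handled by the inductive hypothesis applied to the bounded auxiliary function $c\mapsto\E_{\bar k_t(c,\cdot)}[V_t\phi_t(C_t)]$) plus a conditionally independent, mean-zero fluctuation term killed by Theorem \ref{thm:slln}, then passing from the unnormalised to the ratio estimator and finally to the resampled array via conditioning on $(\mathbf C_t,\mathbf W_t)$. The only difference is technical bookkeeping: where you justify the conditional use of Theorem \ref{thm:slln} by re-deriving its Borel--Cantelli tail estimate from a uniform conditional Rosenthal/Markov bound, the paper instead applies the theorem verbatim to the genuinely independent array obtained by fixing a realisation $\hat{\mathbf c}_{t-1}$ (respectively $(\mathbf c_t,\mathbf v_t)$) and then integrates the indicator of the conditional convergence event.
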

The prerequisites of Theorem \ref{thm:asconvergence} are in line with previous results on a.s.~convergence of non-random-weight SMC in the literature, e.g.~\cite{crisan2000convergence,crisan2002survey}.
There, a.s.~convergence is established for a particle filter in the context of a state space model where the transition kernel of the signal process is a Feller kernel, and similarly, the result requires bounded weights and quantities of interest.

As a remark on Theorems~\ref{thm:convinprob} and \ref{thm:asconvergence}, we recall from elementary theory that the strong law of large numbers for i.i.d.\ sequences is much harder to prove than the weak law.
We do not expect that the (strong) assumption  \eqref{eqn:thmasreq} is necessary to establish a.s.\ convergence, but it does illustrate that the subtleties required in proofs of a.s.\ convergence cannot be avoided in the SMC context.  
The following section presents an example which shows that a.s.\ convergence does indeed require stronger prerequisites than convergence in probability.
This contrasts with the corresponding results for sums of i.i.d.\ random variables, where the two types of convergence hold in equal generality.
Still, the example of i.i.d.\ sums already illustrates the general expectation that a.s.\ convergence should be harder to prove than CLTs and weak laws of large numbers; our results confirm that this expectation also holds for SMC.
The reason is that a.s.\ convergence in SMC requires an analysis of the convergence of a sequence of SMC computations of increasing particle size $H$, in which the dependencies between different members of this sequence have to be controlled.
In contrast, these dependencies do not matter for convergence in probability.

We end this section with a reminder that the proofs of Theorems~\ref{thm:convinprob}, \ref{thm:clt}, and \ref{thm:asconvergence} are given in Appendix \ref{sec:proofs}.

\section{Limits of almost sure convergence} \label{sec:failure}
Comparing the results in Theorem \ref{thm:convinprob} for convergence in probability with the result for a.s.~convergence in Theorem \ref{thm:asconvergence}, one sees that the latter has much more restrictive prerequisites, in particular requiring bounded quantities of interest $\phi$.
While we do not expect that all assumptions in Theorem~\ref{thm:asconvergence} are necessary, the substantial difference in applicability raises the question whether a.s.~convergence holds less generally for SMC methods.

To address this question, we present a very simple realisation of the non-random-weight SMC Algorithm \ref{alg:smc} where the convergence in probability holds but a.s.~convergence fails.
This result is contrary to the notion that the SMC Algorithm \ref{alg:smc} should converge almost surely whenever its expectation exists, which was suggested by \cite{chopin2004central}.

We first give a short explanation as to why the proof of almost sure convergence is not immediate.
Equations~\eqref{eqn:smcasconv1} and \eqref{eqn:smcasconv2} resemble the classical statement of the strong law of large numbers, which states that for  i.i.d.\ random variables $(X_i)_{i \in \N}$ such that $\E[ X_1 ]$ exists, we have
\begin{equation}\label{eqn:asconvergence1}
    \frac{1}{H} \sum_{j=1}^{H} X_j \longrightarrow \E[X_1] \;\; \text{a.s.}
\end{equation}
as $H \to \infty$.
However, within the context of SMC methods, we do not have a simple sequence of variables, but instead a triangular array.
Unlike other notions of convergence (like convergence in probability and in distribution), the convergences in \eqref{eqn:smcasconv1} and \eqref{eqn:smcasconv2} depend on the relation of random variables that live on different rows.
(Given some large row index $H_0$,  almost-sure convergence requires that we prove statements about the \emph{largest} row average among all rows with $H>H_0$, in order to control the limsup of the sequence.)
For our SMC Algorithms \ref{alg:smc} and \ref{alg:rwsmc}, every row in the triangular array is independent, in particular
\begin{equation}
    \hat C_0^{j, H} \sim \pi_0 \;\; \text{i.i.d.\ for all} \;\; j \leq H, H \in \N.
\end{equation}
As a result, the convergence in \eqref{eqn:smcasconv2} does not hold in such generality as the strong law in \eqref{eqn:asconvergence1}.

We demonstrate this using a counterexample from \cite{zaman1984kolmogorov}. 
This example makes use of heavy-tailed symmetric $\alpha$-stable distributions \cite{samorodnitsky1994stable}.
Fix $1 < \alpha < 2$ and let $S_\alpha$ be a symmetric $\alpha$-stable real-valued random variable with characteristic function
\begin{equation}
    \varphi_{S_\alpha}(t) = e^{-|t|^\alpha}.
\end{equation}
For this random variable, the $p$-th (fractional) moment exists if and only if $p < \alpha$ (assuming $\alpha < 2$).  
This fact is linked to the tail probability which has the following property: for $h>0$ large enough there exists $C_{h,\alpha}>0$ such that for all $x\geq h$ we have
\begin{equation}\label{eqn:propalphastable}
	\mathbb{P}(S_\alpha > x) \geq C_{h,\alpha} x^{-\alpha}.
\end{equation}
\begin{examp}[{{\cite[Example 1]{zaman1984kolmogorov}}}] \label{example}
    Let $\mathbf X$ be a triangular array of independent random variables $X^{j, H} \sim S_\alpha$ i.i.d.~for $1 < \alpha < 2$.
    We can explicitly compute the distribution of sums of $\alpha$-stable random variables as
    \begin{equation}
        \sum_{j=1}^{H} X^{j, H} \sim H^{1/\alpha} S_\alpha.
    \end{equation}
    Hence we can write
        \begin{align}\label{eqn:sumofprobabilities}
        \sum_{H=1}^{\infty} \mathbb P \left( \frac1H \sum_{j=1}^{H} X^{j, H} > c \right) 
        &  \geq  \sum_{H=h}^{\infty} \mathbb P \left( S_\alpha > cH^{1-1/\alpha} \right) 
        \nonumber\\
        & \geq C_{h,\alpha}  \sum_{H=h}^{\infty} c^{-\alpha} H^{1-\alpha} = \infty
    \end{align}
    where we used the bound on the tail probability in \eqref{eqn:propalphastable}, and that the final sum diverges for $\alpha\leq2$.    
    Since all entries in the triangular array $\mathbf X$ are independent, 
    the left hand side of \eqref{eqn:sumofprobabilities} is a sum over the probabilities of independent events and the second Borel-Cantelli lemma yields
    \begin{equation}
        \mathbb P \left( \left\{ \frac{1}{H} \sum_{j=1}^{H} X^{j, H} > c \right\} \text{ i.o.} \right) = 1
    \end{equation}
    where i.o.\ stands for ``infinitely often''.  That is, every random variable $X^{j, H}$ has mean zero, but for any $c > 0$ almost every triangular array has infinitely many rows where the row average is bigger than $c$.  For these arrays, the (random) sequence of row averages does not converge to any limit.
    This means in particular that the row averages do not converge to $\E[S_{\alpha}] = 0$ almost surely.
\end{examp}
The crucial ingredient in this argument is the independence of the rows.
The sum of probabilities in \eqref{eqn:sumofprobabilities} would still be infinite for the sums in \eqref{eqn:asconvergence1}.
However in that case, all of the sums are over the same sequence of random variables, so they are dependent and the second Borel-Cantelli lemma does not apply.  
Indeed, proving the strong law of large numbers for \eqref{eqn:asconvergence1} requires some care with the dependence of the partial sums of the sequence.

\subsection{Consequences for the SMC algorithm} \label{sec:conseq}
Following Example \ref{example}, it is simple to construct an example where the SMC algorithm does not converge almost surely, even on the first step.
Take $\spc_0 = \R$ and let $S_\alpha$ with $1 < \alpha < 2$ be a symmetric $\alpha$-stable random variable and set  $\pi_0 \sim S_\alpha$ and $\phi(x) = x$.
Then, the algorithm is initialised by taking $\hat C_0^{j,H} \sim \pi_0$ i.i.d.~but Example \ref{example} shows that 
\begin{equation} \label{eqn:counterexampleasconv}
	\frac{1}{H} \sum_{j=1}^{H} \phi(\hat C^{j, H}_0)
\end{equation}
does not converge almost surely to $\E_{\pi_0}[\phi] = 0$ as $H\to\infty$.
Hence, Theorem \ref{thm:asconvergence} cannot be generalised to include this case.
However, Theorem \ref{thm:convinprob} shows that \eqref{eqn:counterexampleasconv} does converge in probability to $\E_{\pi_0}[\phi] = 0$.
Indeed, this very simple SMC example shows that the a.s.~convergence requires more demanding assumptions than the convergence in probability.

In this example, the failure of a.s.\ convergence happens already on the first step of the algorithm -- this makes it easy to prove that the convergence fails, because the elements of $\mathbf C_0$ are i.i.d.
More generally, this result illustrates that a.s.\ convergence of triangular arrays cannot be expected if the row elements have heavy-tailed distributions.  Hence, even if explicit counterexamples are not easy to construct, the difficulties of proving a.s.\ convergence are not restricted to the first step of the SMC algorithm.

We note in passing that if the rows of the triangular array were dependent then it might be possible to recover almost-sure convergence.
However, it is not clear to us how to control the dependence of the rows during the resampling step of the SMC algorithm.

\subsection{The role of a.s.~convergence in the CLT of \cite{chopin2004central}}
\label{sec:failureCLT}
This section discusses how the failure of a.s.\ convergence affects the proof of the CLT for (non-random-weight) SMC in \cite[Theorem 1]{chopin2004central}.
The key point is that the CLT at step $t$ is derived in that work under the assumption of a.s.\ convergence on step $t-1$.
In the example of Section \ref{sec:conseq}, a.s.\ convergence fails on step $0$.
We now extend this example to show that it invalidates 
the proof of the CLT in~\cite{chopin2004central}, at step $1$.  (Hence our motivation in this work, to provide a self-contained proof of Theorem~\ref{thm:clt} that does not rely on a.s.\ convergence.)

\paragraph{Example two-step SMC algorithm}
Similar to Section \ref{sec:conseq}, take $\spc_1=\spc_0=\mathbb{R}$ and let $S_\alpha$ with $1 < \alpha < 2$ be a symmetric $\alpha$-stable random variable.
Set $\pi_0 \sim S_\alpha$ and take also $\tilde\pi_1=\pi_0$.
This can be achieved by simply defining the kernel $k_1$ to draw independent samples according to $\pi_0$.
As target distribution for $t=1$ we again choose $\pi_1 = \pi_0$ and therefore $v_1 = \id \pi_1 / \id \tilde \pi_1 \equiv 1$.
Now run Algorithm~\ref{alg:smc}.
After the first correction step, the particles $C^{j,H}_1\sim S_\alpha$ are i.i.d.\ and $W^{j,H}_1=1$.
The next step is to resample the particles using multinomial resampling
\begin{equation}
	(C_1^{j, H}, W_1^{j, H})_{j \leq H} \longrightarrow (\hat C_1^{j, H})_{j \leq H}.
\end{equation}
Consider the quantity of interest 
\begin{equation}\label{eqn:def-phi}
    \phi(c) = \mathrm{sign}(c) |c|^{1/2}.
\end{equation}
and its estimator
\begin{equation}
    \hat{F}_1^H = \frac{1}{H} \sum_{j=1}^{H} \phi( \hat C_1^{j, H} ) .
\end{equation}
This SMC algorithm satisfies the requirements of Theorem~\ref{thm:clt} so the CLT holds for $\hat{F}_1^H$ as $H\to\infty$.

Let us now examine this example in the context of \cite[Theorem 1]{chopin2004central}.
Note that in the slightly different formulation of that work, the initialisation and first mutation step coincide: this makes no difference for the current example because the kernel $k_1$ draws independent samples and so can be equivalently interpreted as the initialisation step of the algorithm of~\cite{chopin2004central}, after which the algorithms are equivalent (apart from the notational difference that the initialisation happens at $t=1$ here).
To establish that the example algorithm is within the scope of \cite[Theorem 1]{chopin2004central}, we now need to check that $c_1 \mapsto 1, \phi \in \Phi_1$ where
\begin{equation}
    \Phi_1 := \left\{ \psi \in L^2(\mathcal{C}_1, \tilde \pi_1) \, \middle| \, \E_{\tilde \pi_1}[|v_1(C_1) \psi(C_1)|^{2+\delta}] < \infty \right\}
\end{equation}
for some $\delta>0$.
Since $v_1 \equiv 1$, it is trivial that the unit function $c_1 \mapsto 1 \in \Phi_1$.
Furthermore, we have
\begin{equation}
    \E_{\tilde \pi_1}[|v_1(C_1) \phi(C_1)|^{2+\delta}]
        = \E_{\tilde \pi_1}[|\phi(C_1)|^{2+\delta}]
        = \E_{\tilde \pi_1}[|C_1|^{1 + \delta/2}] < \infty
    \label{equ:E-phi-delta}
\end{equation}
for $\delta < 2(\alpha - 1)$, where we have used the definition of $\phi$ from \eqref{eqn:def-phi} for the second equality.
Hence the scheme satisfies the stated requirements of \cite[Theorem 1]{chopin2004central}.

The strategy of \cite{chopin2004central} is to establish a CLT for $\hat{F}_1^H$ by applying a CLT for triangular arrays with independent elements.
The elements of $\hat{\mathbf{C}}_1$ are not independent but they are conditionally independent given the particles $\mathbf C_1$ before resampling.
To exploit this, one establishes that for almost every realisation of $\mathbf C_1$, a conditional CLT holds for $\hat{\mathbf{C}}_1$.
Under certain technical conditions, this can be used to prove a CLT for the full distribution of $\hat{\mathbf{C}}_1$.
This strategy is appealing and intuitive, but the technical conditions require some care.

The remainder of this section outlines these conditions, and shows that they are violated for the example two-step SMC algorithm described above.
Specifically, the CLT for $\hat{F}_1^H$ requires \cite[Lemma A.3]{chopin2004central}, whose proof fails in our example.
A similar example can be constructed for which the (analogous) proof of \cite[Lemma A.1]{chopin2004central} breaks down. 
However, we emphasise once more that this is a statement about the proof and not about the validity of the CLT in \cite{chopin2004central}: the two-step example obeys Theorem~\ref{thm:clt} which is equivalent to the CLT of \cite{chopin2004central} in this case.

\subsection{Technical conditions for the CLT: two-step example}
We consider aspects of the proofs of Lemmas A.1 and A.3 of \cite{chopin2004central}, which are analogous to each other.
To simplify notation, we denote the $H$-th row of the triangular array by $\mathbf C_1$ by $\mathbf C_1^H$.
The conditional expectation $\E[\phi(\hat C_1^{j, H}) \, | \, \mathbf C_1]$ depends on all variables in the row $\mathbf C_1^H$.
Define
\begin{equation}
	\sigma^2(\mathbf C_1^H) 
		= \var \left( \phi(\hat C_1^{j, H}) \, \middle| \, \mathbf C_1^H\right)
		= \frac{1}{H} \sum_{k=1}^{H} \phi(C_1^{k, H})^2 - \left(\frac{1}{H} \sum_{k=1}^{H} \phi(C_1^{k, H}) \right)^2
    \label{equ:def-sigma2}
\end{equation}
where we used that $v_1 \equiv 1$ in the second equality.
Define also
\begin{equation}
	\nu_\delta(\mathbf C_1^H)
	= \E \left[ \left| \phi(\hat C_1^{j, H}) - \E[ \phi(\hat C_1^{j, H}) \, | \, \mathbf C_1^H] \right|^{2+\delta} \, \middle| \, \mathbf C_1^H\right].
    \label{equ:def-nu}
\end{equation}
Note that $\sigma^2(\mathbf C_1^H)$ and $\nu_\delta(\mathbf C_1^H)$ do not depend on $j=1, \dots, H$.

Among the necessary technical conditions required for the CLT are two a.s.\ convergences as $H\to\infty$: for some $\delta>0$ then
\begin{equation}
	\nu_\delta(\mathbf C_1^H) \to \nu_\delta
	\;\; \text{and} \;\;
	\sigma^2(\mathbf C_1^H) \to \sigma^2 \qquad \mathrm{a.s.}
\end{equation}
with finite limits $\nu_\delta, \sigma^2 < \infty$.
Their convergence ensures that a Lyapunov condition holds almost surely, which in turn proves a conditional CLT for almost all realisations of $\mathbf C_1$.
For the two-step example algorithm, we show that these quantities do not have a.s.\ convergence to any limit.

By construction, we have $C_1^{j, H} \sim S_\alpha$ i.i.d.
We have from \eqref{equ:E-phi-delta} that $\E_{\tilde \pi_1}[|\phi|^{(2+\delta)}] < \infty$.
Then Theorem \ref{thm:slln} shows that
\begin{equation}
	\frac{1}{H} \sum_{j=1}^{H} \phi(C_1^{j, H}) \rightarrow \E_{\tilde \pi_1}[\phi] = 0 \;\; \text{a.s.}
\end{equation}
which, using independence of the elements of $\mathbf C_1$ and the Borel-Cantelli lemma, yields for any $c > 0$
\begin{equation}
	\mathbb P \left( \left\{ \frac{1}{H} \sum_{j=1}^{H} \phi(C_1^{j, H}) > c \right\} \text{ i.o.} \right)
	 = 0.
	  \label{equ:Pio1}
\end{equation}
Also, for any $c > 0$
\begin{equation}
    \mathbb P \left( \left\{ \frac{1}{H} \sum_{j=1}^{H} \phi(C_1^{j, H})^2 > c \right\} \text{ i.o.} \right) 
        \geq  \mathbb P \left( \left\{ \frac{1}{H} \sum_{j=1}^{H} C_1^{j, H} >  c \right\} \text{ i.o.} \right)
        = 1,
    \label{equ:Pio2}
\end{equation}
where the first equality uses the definition of $\phi$ and the second is from Example \ref{example}.
Combining \eqref{equ:def-sigma2},\eqref{equ:Pio1},\eqref{equ:Pio2}, we get $\mathbb P(\{ \sigma^2(\mathbf C_1^H) > \text{i.o.}\}) = 1$ which proves that $\sigma^2(\mathbf C_1^H)$ does not converge almost surely to any limit as $H \to \infty$.
The same argument applies to $\nu_\delta(\mathbf C_1^H)$:
using $v_1 \equiv 1$ and H\"older's inequality in \eqref{equ:def-nu} yields $\nu_\delta(\mathbf C_1^H) \geq (\sigma^2(\mathbf C_1^H))^{2/(2+\delta)}$, so
\begin{equation}
	\mathbb P(\{ \nu_\delta(\mathbf C_1^H) > c\} \; \text{i.o.})
		\geq P(\{ \sigma^2(\mathbf C_1^H) > c^{(2+\delta)/2}\} \; \text{i.o.} )
		= 1.
\end{equation}
One sees that neither $\sigma^2(\mathbf C_1^H)$ and $\nu(\mathbf C_1^H)$ satisfy a.s.\ convergence $H \to \infty$ (as asserted above).
This means that we cannot establish a Lyapunov condition that ensures the conditional convergence of $\hat{F}_1^H$, and the CLT must be established in a different way.

\section*{Acknowledgements}
We thank Nicolas Chopin for helpful advice on methods for proving SMC convergence.
We thank Robert Scheichl, Jan Johannes, Sumeetpal Singh, and Sam Power for helpful discussions and comments on the manuscript, and Nigel Wilding for discussions about statistical physics applications of Monte Carlo methods.
This project was supported by the Leverhulme Trust through project grant RPG-2017-203.

\appendix

\section{Proofs} \label{sec:proofs}
This appendix contains the proof for Theorem \ref{thm:asconvergence} and discusses how Theorem \ref{thm:convinprob} and \ref{thm:clt} are proven following \cite{cappe2005inference}.

\subsection{Almost sure convergence} \label{sec:proofs_as}
We start with the proof of the almost sure convergence result for the SMC algorithm in Theorem \ref{thm:asconvergence}.
The central idea of the proof is the fact that the particles at each step are conditionally independent given the previous step.
This allows us to apply the strong law of large numbers given in Theorem \ref{thm:slln} iteratively, proving convergence of the current step given convergence of the previous one.
This approach is similar to the one taken in \cite{chopin2004central} to prove a CLT.

\begin{proof}[Proof of Theorem \ref{thm:asconvergence}]
    For $t=0$, the convergence of
    \begin{equation}
        \frac{1}{H} \sum_{j=1}^{H} \phi_0(\hat C_0^{j, H})
            \longrightarrow \E_{\pi_0}[\phi] \; \text{a.s.}
    \end{equation}
    for bounded $\phi_0$ follows directly from Theorem \ref{thm:slln}, since all moments of $\phi_0(\hat C_0^{j, H})$ are in particular uniformly bounded.

    For $t>0$, we prove the a.s.\ convergence of SMC in two parts.
    First, we prove the convergence of the ratio estimator \eqref{eqn:smcasconv1} of the weighted particles after the mutation and correction step in Algorithm \ref{alg:rwsmc}, using convergence on the previous level $t-1$.
    Subsequently, we show the convergence of the estimator of the resampled particles \eqref{eqn:smcasconv2}.

    We start with the convergence of \eqref{eqn:smcasconv1}.
    Assume that we have shown the results of Theorem \ref{thm:asconvergence} for all bounded functions $\phi_{t-1}$ on level $t-1$.
    Using this, we show that the non-ratio estimator
    \begin{equation} \label{eqn:nonratioest}
        F_t^H := \frac{1}{H} \sum_{j=1}^{H} V_t^{j,H} \phi_t(C_t^{j,H})
    \end{equation}
    converges a.s., that is $\mathbb P \left( F_t^H \to \E_{\pi_t}[\phi_t] \right) = 1$.
    By making use of the conditional independence of the elements of $\mathbf C_t$, we can apply Theorem \ref{thm:slln} conditional on the previous step $\hat{\mathbf C}_{t-1}$.
    This allows us to rewrite the conditional expectation as a function from $\mathcal{C}_{t-1} \to \R$ for which the induction hypothesis holds.
    
    Using the Markov transition kernel $\bar k_t$, we can compute the conditional expectation of the row average $F_t^H$ as
    \begin{equation} \label{eqn:asconvdefcondexp}
        \E[F_t^H \,|\, \hat{\mathbf C}_{t-1}]
            =  \frac{1}{H} \sum_{j=1}^{H} \E[ V_t^{j, H} \phi_t(C^{j, H}_t) \,|\, \hat{\mathbf C}_{t-1}]
            =  \frac{1}{H} \sum_{j=1}^{H} \E_{\bar k_t(\hat{C}_{t-1}^{j, H}, \cdot)}[ V_t \phi_t(C_t) ],
    \end{equation}
    where the second equality uses that $C_t^{j,H}$ is generated in the algorithm by mutation of $\hat{C}_{t-1}^{j,H}$ with the kernel $\bar k_t$.
    Now let $(\mathbf C_t, \mathbf V_t) | \hat{\mathbf c}_{t-1} \sim \bar k_t(\hat{\mathbf c}_{t-1}, \cdot)$ be the result of applying the Markov transition kernel $\bar k_t$ to a realisation $\hat{\mathbf c}_{t-1}$ of $\hat{\mathbf C}_{t-1}$.
    By construction, $(\mathbf C_t, \mathbf V_t) | \hat{\mathbf c}_{t-1}$ is an array of independent random variables and by assumption \eqref{eqn:thmasreq} we have
    \begin{equation} \label{eqn:upperboundasconv}
        \sup_{j \leq H, H \in \N} \E_{\bar k_t(\hat c_{t-1}^{j, H}, \cdot)}\left[ | V_t \phi_t(C_t) |^{2+\delta}\right]
            \leq \|\phi_t^{2 + \delta}\|_{\infty} \sup_{\hat c_{t-1}} \E_{\bar k_t(\hat c_{t-1}, \cdot)}\left[ | V_t |^{2+\delta} \right] 
            < \infty.
    \end{equation}
    Therefore, we can apply Theorem \ref{thm:slln} to prove the convergence of $F_t^H|\hat{\mathbf c}_{t-1}$ to its conditional expectation \eqref{eqn:asconvdefcondexp} given $\hat{\mathbf C}_{t-1} = \hat{\mathbf c}_{t-1}$, that is
    \begin{equation} \label{eqn:asconvfixedrealisation}
        \mathbb P\left( F_t^H | \hat{\mathbf c}_{t-1} - \E[F_t^H \,|\, \hat{\mathbf C}_{t-1}=\hat{\mathbf c}_{t-1}] \to 0 \right) = 1.
    \end{equation}
    Since the convergence in \eqref{eqn:asconvfixedrealisation} holds for any realisation $\hat{\mathbf c}_{t-1}$, we have computed a version of the conditional expectation
    \begin{equation} \label{eqn:asconvfth}
        \E[ \mathds{1}_{\{ F_t^H - \E[F_t^H \,|\, \hat{\mathbf C}_{t-1}]  \to 0 \} } \,|\, \hat{\mathbf C}_{t-1}]
        = \E_{\bar k_t(\hat{\mathbf C}_{t-1}, \cdot)}[\mathds{1}_{\{ F_t^H - \E[F_t^H \,|\, \hat{\mathbf C}_{t-1}]  \to 0\} } ]
        = 1,
    \end{equation}
    where $\mathds{1}_{A}$ is the indicator function of the set $A$.

    As we have shown in \eqref{eqn:upperboundasconv}, the function $\phi_{t-1}(c_{t-1}) := \E_{\bar k_t(c_{t-1}, \cdot)}[V_t \phi_t(C_t)]$ is bounded and we can apply the convergence \eqref{eqn:smcasconv2} on level $t-1$ to the conditional expectation in \eqref{eqn:asconvdefcondexp} yielding
    \begin{equation} \label{eqn:asconvcondexp}
        \mathbb P\left(\E[F_t^H \,|\, \hat{\mathbf C}_{t-1}] \to \E_{\pi_{t-1}}[\phi_{t-1}(C_{t-1})]\right) 
            = 1.
    \end{equation}
    By definition of the SMC algorithm we have also
    \begin{equation}
        \E_{\pi_{t-1}}[\phi_{t-1}(C_{t-1})] = \E_{\pi_{t-1}}[ \E_{\bar k_t(C_{t-1}, \cdot)}[V_t \phi_t(C_t)]]
            = \E_{\pi_t}[\phi_t].
    \end{equation}
    Hence the convergence point of the limit in \eqref{eqn:asconvcondexp} coincides with 
    the level-$t$ average of the SMC algorithm.
    
    To prove a.s.\ convergence of \eqref{eqn:nonratioest}, we decompose the corresponding expectation into two steps
    \begin{align}
        \mathbb P( F_t^H \to \E_{\pi_t}[\phi_t])
            &= \E[ \mathds{1}_{\{ F_t^H \to \E_{\pi_t}[\phi_t] \} } ] \nonumber \\
            &\geq \E\left[ \mathds{1}_{\{ F_t^H - \E[F_t^H \,|\, \hat{\mathbf C}_{t-1}] \to 0 \} }
                 \cdot \mathds{1}_{ \{\E[F_t^H \,|\, \hat{\mathbf C}_{t-1}] - \E_{\pi_t}[\phi_t] \to 0 \} } \right] \nonumber \\
            &= \E\left[ \E[  \mathds{1}_{\{ F_t^H - \E[F_t^H |\hat{\mathbf C}_{t-1}] \to 0 \} } \,|\, \hat{\mathbf C}_{t-1}]  
                \cdot \mathds{1}_{ \{\E[F_t^H \,|\, \hat{\mathbf C}_{t-1}] - \E_{\pi_t}[\phi_t] \to 0 \} } \right].
    \end{align}
    Using \eqref{eqn:asconvfth} and \eqref{eqn:asconvcondexp}, we have
    \begin{equation}
         \E\left[ \E[  \mathds{1}_{\{ F_t^H - \E[F_t^H |\hat{\mathbf C}_{t-1}] \to 0 \} } \,|\, \hat{\mathbf C}_{t-1}]  
                \cdot \mathds{1}_{ \{\E[F_t^H \,|\, \hat{\mathbf C}_{t-1}] - \E_{\pi_t}[\phi_t] \to 0 \} } \right]
            = 1
    \end{equation}
    and therefore $\mathbb P(F_t^H \to \E_{\pi_t}[\phi_t]) = 1$ which proves the almost sure convergence of the weighted averages on level $t$.
    
    Since the constant function $c_t \mapsto 1$ is bounded, the previous argument also shows that the weight averages converge as
    \begin{equation}
        \mathbb P\left( \frac{1}{H} \sum_{j=1}^{H} V_t^{j,H} - \E_{\pi_t}[1] \to 0 \right) = 1.
    \end{equation}
    Therefore, we also have shown that
    \begin{equation}
    	\mathbb P\left( \sum_{j=1}^{H} \frac{V_t^{j, H}}{ \sum_{k=1}^{H} V_t^{k, H}} \phi_t(C_t^{j, H}) \to \E_{\pi_t}[\phi_t] \right) = 1
    	\label{eqn:asconvofratio}
    \end{equation}
    which proves the convergence of the ratio estimator \eqref{eqn:smcasconv1} (which does not depend on the normalisation of the weights $W_t \propto V_t$).
    This completes the first part of the proof.
    
    We now prove the second part of the theorem, that is the a.s.\ convergence in \eqref{eqn:smcasconv2} for the resampled triangular array $\hat{\mathbf C}_{t}$.
    The proof follows the same steps as above.
    Let $m$ be the rowwise multinomial resampling kernel.
    Let $\hat{\mathbf C}_t | ({\mathbf c}_t,{\mathbf v}_t) \sim m(\mathbf c_t,{\mathbf v}_t; \cdot)$ be the result of applying the multinomial resampling kernel to the rows of a realisation $(\mathbf c_t, \mathbf v_t)$ of the array of particles and weights $(\mathbf C_t, \mathbf V_t)$.
    We need to prove the a.s.\ convergence of the resampled array
    \begin{equation}
        \hat F_t^H := \frac{1}{H} \sum_{j=1}^{H} \phi_t(\hat C_t^{j,H}).
    \end{equation}
    Since $\phi_t$ is bounded and $\hat{\mathbf C}_t | ({\mathbf c}_t,{\mathbf v}_t)$ is an array of independent random variables, we can apply the strong law of large numbers from Theorem \ref{thm:slln} to get
    \begin{equation}
        \mathbb P \left(\hat{F}^H_t|({\mathbf c}_t,{\mathbf v}_t)
                - \E[\hat{F}^H_t \,|\, ({\mathbf C}_t,{\mathbf V}_t)=({\mathbf c}_t,{\mathbf v}_t)] \to 0 \right) = 1
    \end{equation}
    for any realisation $({\mathbf c}_t,{\mathbf v}_t)$ which, as before, gives us
    \begin{equation}
        \E[  \mathds{1}_{\{ \hat F_t^H - \E[\hat F_t^H \,|\, \mathbf C_t, \mathbf V_t]  \to 0 \} } \,|\,  \mathbf C_t, \mathbf V_t]
            = \E_{m( \mathbf C_{t}, \mathbf V_{t}; \cdot)}[  \mathds{1}_{\{ \hat F_t^H - \E[\hat F_t^H \,|\,  \mathbf C_t, \mathbf V_t]  \to 0 \} } ]
            = 1.
    \end{equation}
    The definition of multinomial resampling is that each element of the resampled array is drawn independently from the corresponding row of the original array, with probabilities given by the normalised row weights.
    Hence (a version of) the conditional expectation of $\hat F_t^H$ is
    \begin{equation}
        \E[ \hat F_t^H \,|\,  \mathbf C_{t}, \mathbf V_{t}] 
            = \sum_{j=1}^{H} \frac{V_t^{j, H}}{\sum_{k=1}^{H} V_t^{k, H}} \phi_t(C_t^{j, H})
    \end{equation}
    which is the ratio estimator from \eqref{eqn:asconvofratio} which converges a.s.\ to $\E_{\pi_t}[\phi_t]$.
	Therefore,
    \begin{align}
        \mathbb P(\hat F_t^H \to \E_{\pi_t}[\phi_t]) 
            &= \E[ \mathds{1}_{\{ \hat F_t^H \to  \E_{\pi_t}[\phi_t] \}}] \nonumber \\
            &\geq \E\left[ \mathds{1}_{\{ \hat F_t^H - \E[\hat F_t^H \,|\, \mathbf C_t, \mathbf V_t]  \to 0 \} }  
                \cdot \mathds{1}_{\{ \E[\hat F_t^H \,|\, \mathbf C_t, \mathbf V_t] - \E_{\pi_t}[\phi_t] \to 0 \} }
                \right] \nonumber \\
            &= 1.
    \end{align}
    This proves the a.s.\ convergence in \eqref{eqn:smcasconv2}.
\end{proof}

\subsection{Convergence in probability and central limit theorem}
The proofs of Theorem \ref{thm:convinprob} and Theorem \ref{thm:clt} are based on \cite{cappe2005inference}, which provides a central limit theorem for an SMC method applied to hidden Markov Models.
We apply their results to prove convergence in probability and a central limit theorem for our formulation of the random-weight SMC Algorithm \ref{alg:rwsmc}.
In particular, we show that the randomisation of the weights does not add any additional complications.
Fundamentally, the approach taken by \cite{cappe2005inference} is also similar to the previous section and to \cite{chopin2004central}, as it develops the result by iteratively following the steps of the algorithm and making use of conditional independence.
However unlike \cite{chopin2004central}, it does not rely on almost sure convergence as the preliminary result for the CLT, which appears to severely limit the applicability of those proofs.

First, we consider the convergence results for triangular arrays of conditionally independent variables that underlie the corresponding convergence results for SMC.
They are the equivalent of the strong law of large numbers for triangular arrays used for the proof in Appendix \ref{sec:proofs_as}.
We start with a preliminary definition.
\begin{defin}[Bounded in Probability]
    We call a sequence of random variables $\{Z^H\}_{H \in \N}$ bounded in probability if
    \begin{equation}
        \lim_{C \to \infty} \sup_{H \in \N} \mathbb P(|Z^H| \geq C) = 0.
    \end{equation}
\end{defin}
Any sequence of random variables that converges in distribution to $Z^H \Dto Z \in L^1$ is in particular bounded in probability.
We call a triangular array of random variables $\mathbf X$ conditionally independent with respect to a sequence of sub-$\sigma$-algebras $\{\mathcal F^{H}\}_H$ if the elements of the $H$-th row of the array $\{X^{j, H}, j=1, \dots, H\}$ are conditionally independent given $\mathcal{F}^H$.
For conditionally independent triangular arrays of random variables, one has a weak law of large numbers, given by the following theorem.
\begin{thm}[{{\cite[Prop.\ 9.5.7]{cappe2005inference}}}] \label{thm:condwlln}
    Let $\mathbf X$ be a triangular array of random variables and $\{\mathcal F^H \}_H$ a sequence of sub-$\sigma$-algebras of $\mathcal F$.
    Assume that the following conditions hold.
    \begin{enumerate}[label=(\roman*)]
        \item The triangular array $\mathbf X$ is conditionally independent given $\{\mathcal F^H \}_H$; also, for any $H$ one has for $j=1,2,\dots,H$ that $\E\left[|X^{j, H}| \, \middle | \, \mathcal{F}^H\right] < \infty$ with probability $1$.
        \item The sequence $\left \{ \sum_{j=1}^{H} \E\left[ |X^{j, H}| \, \middle | \, \mathcal F^H\right] \right\}_{H \in \N}$ is bounded in probability.
        \item For all $\eps > 0$,
        \begin{equation} \label{eqn:condwlln_eps}
            \sum_{j=1}^{H} \E\left[ |X^{j, H}| \mathds{1}_{\{ |X^{j, H}| \geq \eps \}} \, \middle | \, \mathcal F^H\right]
                \Pto 0.
        \end{equation}
    \end{enumerate}
    Then
    \begin{equation}
        \sum_{j=1}^{H} \left\{ X^{j, H} - \E\left[ X^{j, H} \, \middle | \, \mathcal F^H\right] \right\}
            \Pto 0.
    \end{equation}
\end{thm}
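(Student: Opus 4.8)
The plan is to adapt the classical truncation proof of the weak law of large numbers for triangular arrays, performing every estimate conditionally on $\mathcal{F}^H$ and only integrating out the conditioning at the very end. Write $S_H := \sum_{j=1}^H \{ X^{j, H} - \E[X^{j, H} \mid \mathcal{F}^H] \}$ for the centered sum. Fix $\varepsilon > 0$ and split each variable as $X^{j,H} = Y^{j,H} + Z^{j,H}$, where $Y^{j,H} = X^{j,H} \mathds{1}_{\{ |X^{j,H}| < \varepsilon \}}$ is the truncated part and $Z^{j,H} = X^{j,H} \mathds{1}_{\{ |X^{j,H}| \geq \varepsilon \}}$ the tail. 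Centering each piece by its conditional mean, it suffices to show that $\sum_{j=1}^H ( Y^{j,H} - \E[Y^{j,H} \mid \mathcal{F}^H] )$ and $\sum_{j=1}^H ( Z^{j,H} - \E[Z^{j,H} \mid \mathcal{F}^H] )$ each converge to $0$ in probability.

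The tail part is controlled directly by hypothesis (iii). Bounding the centered tail sum in absolute value by $\sum_j |Z^{j,H}| + \sum_j \E[|Z^{j,H}| \mid \mathcal{F}^H]$, the second term equals $R_H := \sum_j \E[ |X^{j,H}| \mathds{1}_{\{ |X^{j,H}| \geq \varepsilon \}} \mid \mathcal{F}^H ] \Pto 0$ by (iii), and the first satisfies $\mathbb{P}( \sum_j |Z^{j,H}| > \eta \mid \mathcal{F}^H ) \leq R_H / \eta$ by conditional Markov. For the truncated part I would exploit the conditional independence in (i): the conditional variance factorises over $j$, and since $\E[(Y^{j,H})^2 \mid \mathcal{F}^H] \leq \varepsilon\, \E[|X^{j,H}| \mid \mathcal{F}^H]$, one obtains $\var( \sum_j Y^{j,H} \mid \mathcal{F}^H ) \leq \varepsilon\, G_H$, where $G_H := \sum_j \E[ |X^{j,H}| \mid \mathcal{F}^H ]$. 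Conditional Chebyshev then gives $\mathbb{P}( | \sum_j ( Y^{j,H} - \E[Y^{j,H} \mid \mathcal{F}^H] ) | > \eta \mid \mathcal{F}^H ) \leq \varepsilon\, G_H / \eta^2$.

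The final step is to convert these conditional bounds into genuine convergence in probability, and this is where I expect the main obstacle to lie, since the right-hand sides $R_H / \eta$ and $\varepsilon G_H / \eta^2$ are $\mathcal{F}^H$-measurable random quantities rather than constants. The device is to integrate the clipped conditional probabilities, using $\mathbb{P}( U_H > \eta ) = \E[ \mathbb{P}( U_H > \eta \mid \mathcal{F}^H ) ] \leq \E[ \min(1, b_H) ]$ for the relevant conditional bound $b_H$. For the tail, $b_H = R_H / \eta$ is bounded and tends to $0$ in probability, so $\E[\min(1, R_H / \eta)] \to 0$ by bounded convergence. For the truncated part, hypothesis (ii) enters decisively: given $\zeta > 0$, boundedness in probability of $G_H$ supplies an $M$ with $\sup_H \mathbb{P}(G_H > M) < \zeta$, whence $\E[ \min(1, \varepsilon G_H / \eta^2) ] \leq \varepsilon M / \eta^2 + \zeta$. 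Fixing $\varepsilon$ and sending $H \to \infty$ removes the tail contribution via (iii), giving $\limsup_H \mathbb{P}( |S_H| > 2\eta ) \leq \varepsilon M / \eta^2 + \zeta$; letting $\varepsilon \to 0$ and then $\zeta \to 0$ yields $S_H \Pto 0$. The delicate point throughout is that ``bounded in probability'' is exactly the integrability surrogate that lets one pass from the random Chebyshev estimate to an unconditional limit, so the order of limits — first $H$, then $\varepsilon$, then $\zeta$ — must be respected, and a bounded-convergence argument applied to the clipped conditional probabilities at each stage.
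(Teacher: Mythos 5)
This theorem is not proved in the paper at all: it is imported verbatim from Capp\'e et al.\ (Prop.\ 9.5.7), so there is no in-paper argument to compare against; your proof therefore serves as a self-contained reconstruction of the cited result, and it is correct. It is, moreover, essentially the standard truncation proof behind that proposition: the split at level $\varepsilon$, the conditional Chebyshev bound $\varepsilon G_H/\eta^2$ for the truncated part (valid because conditional independence makes the conditional variance additive and $(Y^{j,H})^2 \leq \varepsilon\, |X^{j,H}|$), the use of hypothesis (iii) both for the tail sum (via conditional Markov) and for its conditional compensator $R_H$, and the clipping device $\mathbb{P}(U_H > \eta) = \E[\mathbb{P}(U_H > \eta \mid \mathcal{F}^H)] \leq \E[\min(1, b_H)]$ to convert random conditional bounds into unconditional ones are all sound. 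The genuinely delicate point is handled correctly: the constant $M$ supplied by the boundedness-in-probability hypothesis (ii) depends only on $\zeta$ and not on $\varepsilon$, and the quantity $\mathbb{P}(|S_H| > 2\eta)$ does not depend on the internal truncation level, so the order of limits (first $\limsup_H$ at fixed $\varepsilon$, then $\varepsilon \to 0$, then $\zeta \to 0$) is legitimate and yields $S_H \Pto 0$.
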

There is a corresponding central limit theorem for conditionally independent triangular arrays, as follows.
\begin{thm}[{{\cite[Prop.\ 9.5.12]{cappe2005inference}}}]\label{thm:condclt}
    Let $\mathbf X$ be a triangular array of random variables and $\{\mathcal F^H \}_H$ a sequence of sub-$\sigma$-algebras of $\mathcal F$.
    Assume that the following conditions hold.
    \begin{enumerate}[label=(\roman*)]
        \item The triangular array $\mathbf X$ is conditionally independent given $\{\mathcal F^H \}_H$; also, for any $H$ one has for $j=1,2,\dots,H$ that $\E\left[(X^{j, H})^2 \, \middle | \, \mathcal{F}^H\right] < \infty$ with probability $1$.
        \item There exists a constant $\sigma^2$ such that
        \begin{equation}
            \sum_{j=1}^{H} \left\{ \E\left[ (X^{j, H})^2 \, \middle | \, \mathcal{F}^H \right]
                - \left( \E \left[ X^{j, H} \, \middle | \, \mathcal{F}^H \right] \right)^2 \right\}
                \Pto \sigma^2.
        \end{equation}
        \item For all $\eps > 0$
        \begin{equation}
            \sum_{j=1}^{H} \E \left[ (X^{j, H})^2 \mathds{1}_{\{|X^{j, H}| \geq \eps \}} \, \middle | \, \mathcal{F}^H \right]
                \Pto 0.
        \end{equation}
    \end{enumerate}
    Then for any $u \in \R$
    \begin{equation}
        \E \left[ \exp \left( i u \sum_{j=1}^H \left\{ X^{j, H} - \E[X^{j, H} \, | \, \mathcal F^{H}] \right\} \right) \, \middle | \, \mathcal{F}^H \right]
            \Pto \exp(-\sigma^2 u^2/2).
    \end{equation}
\end{thm}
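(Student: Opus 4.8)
The plan is to adapt the classical Lindeberg--Feller argument to the conditional setting, working throughout with the characteristic function and treating every conditional quantity as a random variable measurable with respect to $\mathcal F^H$. First I would center: set $Y^{j,H} = X^{j,H} - \E[X^{j,H}\mid\mathcal F^H]$, so that the $Y^{j,H}$ are conditionally independent given $\mathcal F^H$, with conditional mean zero and conditional variance $s_{j,H}^2 := \E[(X^{j,H})^2\mid\mathcal F^H] - (\E[X^{j,H}\mid\mathcal F^H])^2$. Condition (ii) then reads $\sum_j s_{j,H}^2 \Pto \sigma^2$. By conditional independence the target conditional characteristic function factorizes,
\begin{equation}
\E\Big[\exp\big(iu\textstyle\sum_j Y^{j,H}\big)\,\Big|\,\mathcal F^H\Big] = \prod_{j=1}^H \varphi_{j,H}(u), \qquad \varphi_{j,H}(u) := \E\big[e^{iuY^{j,H}}\mid\mathcal F^H\big],
\end{equation}
so it suffices to show $\prod_{j} \varphi_{j,H}(u) \Pto e^{-u^2\sigma^2/2}$ for each fixed $u$.

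Next I would compare this product with $\prod_j e^{-u^2 s_{j,H}^2/2}$ via the elementary telescoping bound $|\prod_j a_j - \prod_j b_j| \le \sum_j |a_j - b_j|$, valid whenever $|a_j|,|b_j|\le 1$; here $|\varphi_{j,H}(u)|\le 1$ and $e^{-u^2 s_{j,H}^2/2}\le 1$ hold pointwise in $\omega$. For each summand I would insert the common quadratic $1 - \tfrac{u^2}{2}s_{j,H}^2$ and estimate the two pieces separately: the deterministic inequality $|e^{-x}-1+x|\le x^2/2$ controls $|e^{-u^2 s_{j,H}^2/2} - (1-\tfrac{u^2}{2}s_{j,H}^2)|\le \tfrac{u^4}{8}s_{j,H}^4$, while the standard bound $|e^{ix} - (1+ix-\tfrac{x^2}{2})|\le \min(|x|^3/6,|x|^2)$, applied to $x=uY^{j,H}$ and averaged under $\E[\,\cdot\mid\mathcal F^H]$, gives $|\varphi_{j,H}(u)-(1-\tfrac{u^2}{2}s_{j,H}^2)| \le \E[\min(\tfrac{|u|^3|Y^{j,H}|^3}{6},|u|^2|Y^{j,H}|^2)\mid\mathcal F^H]$, where I use that $\E[Y^{j,H}\mid\mathcal F^H]=0$ kills the linear term.

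I would then truncate at a level $\eps>0$: on $\{|Y^{j,H}|<\eps\}$ the cubic term is at most $\tfrac{|u|^3\eps}{6}(Y^{j,H})^2$, and on $\{|Y^{j,H}|\ge\eps\}$ the quadratic term is retained, so after summing, the total discrepancy is bounded by
\begin{equation}
\tfrac{|u|^3\eps}{6}\sum_j s_{j,H}^2 \;+\; u^2\sum_j \E\big[(Y^{j,H})^2\mathds{1}_{\{|Y^{j,H}|\ge\eps\}}\mid\mathcal F^H\big] \;+\; \tfrac{u^4}{8}\big(\max_k s_{k,H}^2\big)\sum_j s_{j,H}^2.
\end{equation}
The first term is $O(\eps)$ times a quantity bounded in probability by (ii); the conditional Lindeberg sum in the middle $\Pto 0$ once transferred to the centered variables; and $\max_k s_{k,H}^2\Pto 0$ makes the last term vanish. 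Sending $\eps\downarrow 0$ after $H\to\infty$ yields $\prod_j\varphi_{j,H}(u) - \exp(-\tfrac{u^2}{2}\sum_j s_{j,H}^2)\Pto 0$, and combining with $\sum_j s_{j,H}^2\Pto\sigma^2$ through the continuous mapping theorem gives the claimed limit $e^{-u^2\sigma^2/2}$.

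The main obstacle I anticipate is the bookkeeping needed to pass the Lindeberg hypothesis (iii), which is stated for the raw $X^{j,H}$, to the centered $Y^{j,H}$, and to establish the uniform conditional negligibility $\max_k s_{k,H}^2\Pto 0$. The negligibility follows from $\E[(X^{j,H})^2\mid\mathcal F^H]\le \eps^2 + \sum_k\E[(X^{k,H})^2\mathds{1}_{\{|X^{k,H}|\ge\eps\}}\mid\mathcal F^H]$ together with (iii) and the arbitrariness of $\eps$; the transfer of the truncated second moment from $X$ to $Y$ then uses $(a-b)^2\le 2a^2+2b^2$ and the smallness of the conditional means $\E[X^{j,H}\mid\mathcal F^H]$. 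Care is also required because every bound above must be read $\omega$-by-$\omega$ (the telescoping and Taylor estimates are deterministic once we condition on $\mathcal F^H$) before the convergences in (ii)--(iii) are invoked; this interplay between pointwise algebra and convergence in probability is the one place where the conditional version genuinely departs from the unconditional Lindeberg--Feller proof.
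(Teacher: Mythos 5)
Your proof is correct: the conditional Lindeberg--Feller argument you outline (factorisation of the conditional characteristic function by conditional independence, telescoping comparison with $\prod_j e^{-u^2 s_{j,H}^2/2}$, Taylor bounds with truncation, transfer of the Lindeberg condition to the centred variables, and negligibility of $\max_k s_{k,H}^2$) goes through, and the order of limits ($H\to\infty$ before $\eps\downarrow 0$) is handled properly. The paper itself offers no proof of this statement --- it is quoted verbatim from \cite[Prop.~9.5.12]{cappe2005inference} --- and your argument is essentially the standard one used in that reference, so there is nothing to contrast.
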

To prove the results of Theorem \ref{thm:convinprob} and \ref{thm:clt}, we follow the same steps as the proofs of the corresponding results for the SMC algorithm for hidden Markov models presented in \cite[Chapter 9]{cappe2005inference}.
At each level of the SMC algorithm, the triangular array of random variables is conditionally independent given the previous step.
To establish convergence in probability and a central limit theorem, we iteratively apply Theorem \ref{thm:condwlln} or Theorem \ref{thm:condclt} conditional on the previous step, respectively.
The corresponding results for an SMC algorithm for hidden Markov models is given in \cite[Theorem 9.4.5]{cappe2005inference}.
For completeness, we present a self-contained version of the proofs for our version of the algorithm using the notation of the main text below.

\subsubsection{Proof of Theorem \ref{thm:convinprob}}
We start with the proof of Theorem \ref{thm:convinprob} on the convergence of probability of the SMC algorithm.
It follows the same steps as the proofs of Theorem 9.3.5 and Theorem 9.2.9 of \cite{cappe2005inference}.
\begin{proof}[Proof of Theorem \ref{thm:convinprob}]
    At $t=0$, we start with an array of independent random variables $\hat C_0^{j, H} \sim \pi_0$, and therefore by the weak law of large numbers we have
    \begin{equation}
        \frac{1}{H} \sum_{j=1}^{H} \phi_0(\hat C_0^{j, H}) \Pto \E_{\pi_0}[\phi_0]
    \end{equation}
    for all $\phi_0 \in L^1(\mathcal{C}_0, \pi_0)$.

    For $t > 0$, we prove the convergence by induction.
    Similar to the proof of a.s.\ convergence in Appendix \ref{sec:proofs_as}, we split the proof in two steps, starting with the convergence of the ratio estimator (\ref{eqn:convpratio}) and then looking at the convergence of the estimator of the resampled array (\ref{eqn:convpresampled}).
    
    To prove the convergence of (\ref{eqn:convpratio}), assume that on level $t-1$ for the triangular array $\hat{\mathbf{C}}_{t-1}$, we have
    \begin{equation} \label{eqn:convpinduction}
        \frac{1}{H} \sum_{j=1}^{H} \phi_{t-1}(\hat C_{t-1}^{j, H}) \Pto \E_{\pi_{t-1}}[\phi_{t-1}] 
    \end{equation}
    for all $\phi_{t-1} \in L^1(\mathcal{C}_{t-1}, \pi_{t-1})$.
    Let
    \begin{equation}
        \mathcal{F}_{t-1}^H = \sigma( \hat C_{t-1}^{j, H}, j=1, \dots, H )
    \end{equation}
    be the $\sigma$-algebra of the $H$-th row of the previous step, before the mutation of the particles.
    By construction, $(\mathbf{C}_t, \mathbf{V}_t)$ is a triangular array of conditionally independent random variables given $\{\mathcal{F}_{t-1}^H\}_H$.
    Let 
    \begin{equation}
        \mu(c_{t-1}) := \E_{\bar k_t(c_{t-1}, \cdot)}[V_t \phi_t(C_t)].
    \end{equation}
    Observe that $|\mu(c_{t-1})| \leq E_{\bar k_{t}(c_{t-1}, \cdot)}[V_t |\phi_t(C_t)|]$ since $V_t > 0$.
    Using this and the unbiasedness of the weights $V_t$, one obtains
    \begin{equation}\label{eqn:boundonmu}
        \E_{\pi_{t-1}}[|\mu(C_{t-1})|] \leq \E_{\tilde{\pi}_t}[V_t |\phi_t(C_t)|] = \E_{\pi_t}[|\phi_t|] < \infty
    \end{equation}
    where the last inequality follows because $\phi_t \in L^1(\mathcal{C}_t, \pi_t)$ by assumption.
    By construction, $\mu$ is the conditional expectation of the weighted particle expectation
    \begin{equation}
        \E[V_t^{j, H} \phi_t(C_t^{j, H}) \,|\, \mathcal{F}_{t-1}^H]
            = \mu(\hat C_{t-1}^{j, H}).
    \end{equation}
    Following from \eqref{eqn:boundonmu}, $\mu \in L^1(\mathcal{C}_{t-1}, \pi_{t-1})$ and we can apply the convergence \eqref{eqn:convpinduction} on level $t-1$
    \begin{equation} \label{eqn:convofcondexp}
        \frac{1}{H} \sum_{j=1}^{H} \E[V_t^{j, H} \phi_t(C_t^{j, H}) \,|\, \mathcal{F}_{t-1}^H]
            = \frac{1}{H} \sum_{j=1}^{H} \mu(\hat C_{t-1}^{j, H})
            \Pto \E_{\tilde \pi_{t}}[V_t \phi_t(C_t)] 
            = \E_{\pi_t}[\phi_t].
    \end{equation}
    This establishes that mutation and reweighting provides an estimate of $\phi_t$ with the correct conditional expectation.

    As the next step, we show that the SMC estimator itself converges in probability to its conditional expectation:
    \begin{equation} \label{eqn:convpproofconv1}
        \frac{1}{H} \sum_{j=1}^{H} V_t^{j, H} \phi_t(C_t^{j, H}) 
                - \frac{1}{H} \sum_{j=1}^{H} \E[V_t^{j, H} \phi_t(C_t^{j, H}) \,|\, \mathcal{F}_{t-1}^H]
            \Pto 0.
    \end{equation}
    For this, we apply Theorem \ref{thm:condwlln} to the conditionally independent triangular array $\mathbf X_t$ with elements
    \begin{equation}
        X_t^{j, H} := \frac{1}{H} V_t^{j, H} \phi_t(C_t^{j, H}).
    \end{equation}
    Condition (i) of Theorem \ref{thm:condwlln} holds since the conditional expectation is almost surely finite following \eqref{eqn:boundonmu}.
    It remains to check conditions (ii) and (iii).
    Following the same argument as for (\ref{eqn:convofcondexp}), we have
    \begin{equation}
        \sum_{j=1}^{H} \E\big[|X_t^{j, H}| \, \big| \, \mathcal{F}_{t-1}^H\big] \Pto \E_{\pi_{t}}[|\phi_t|]
    \end{equation}
    which implies boundedness in probability, so condition (ii) holds.
    For any $\eps > 0$, we have
    \begin{equation} \label{eqn:boudinindicator}
        \E\big[|X_t^{j, H}| \mathds{1}_{\{|X_t^{j, H}|\geq \eps\}} \,\big|\, \mathcal{F}_{t-1}^H \big]
            = H^{-1} \E_{\bar k_t(\hat C_{t-1}^{j, H}, \cdot)}[ |V_t \phi_t(C_t)| \mathds{1}_{\{|V_t \phi_t(C_t)|\geq H\eps\}}],
    \end{equation}
    which uses the definition of $X_t^{j,H}$ and that $(C_t^{j,H},V_t^{j,H})$ are obtained by applying the kernel $\bar{k}_t$ to the $H$-th row of the previous step.
    For any $\alpha > 0$, the function
    \begin{equation}
        c_{t-1} \mapsto \E_{\bar k_t(c_{t-1}, \cdot)}[|V_t \phi_t(C_t)| \mathds{1}_{\{|V_t \phi_t(C_t)|\geq \alpha\}}] \in L^1(\mathcal{C}_{t-1}, \pi_{t-1})
    \end{equation}
    since
    \begin{equation} \label{eqn:boundonrestr}
    	\E_{\tilde \pi_t}[|V_t \phi_t(C_t)| \mathds{1}_{\{|V_t \phi_t(C_t)|\geq \alpha\}}]
    	 	\leq \E_{\tilde \pi_t}[V_t |\phi_t(C_t)|]
    	 	= \E_{\pi_t}[|\phi_t|]
    	 	< \infty.
    \end{equation}
    For $H \geq \alpha/\eps$, we sum \eqref{eqn:boudinindicator} over $j=1, \dots, H$ and get
    \begin{align}
        \sum_{j=1}^{H} \E\big[|X_t^{j, H}| \mathds{1}_{\{|X_t^{j, H}|\geq \eps\}} \, \big| \, \mathcal{F}_{t-1}^H \big]
            &\leq \frac{1}{H} \sum_{j=1}^{H} \E_{\bar k_t(\hat C_{t-1}^{j, H}, \cdot)}[ |V_t \phi_t(C_t)| \mathds{1}_{\{|V_t \phi_t(C_t)|\geq \alpha\}}] \nonumber \\
            &\Pto \E_{\tilde{\pi}_t}[|V_t \phi_t(C_t)| \mathds{1}_{\{|V_t \phi_t(C_t)|\geq \alpha\}}]
            \label{eqn:convofV}
    \end{align}
    as $H \to \infty$, where the convergence follows from \eqref{eqn:convpinduction} and \eqref{eqn:boundonrestr}.
    The right-hand side of \eqref{eqn:convofV} converges to $0$ as $\alpha \to \infty$ by dominated convergence.
    Hence, the left-hand side in \eqref{eqn:convofV} also converges to $0$ in probability which is condition (iii) of Theorem \ref{thm:condwlln}.
    Therefore, we can apply Theorem \ref{thm:condwlln} with our definition of $\mathbf X_t$ and we have shown that for any $\phi_t \in L^1(\mathcal{C}_t, \pi_t)$:
    \begin{equation} \label{eqn:convpagainscondexpt}
        \frac{1}{H} \sum_{j=1}^{H} \left\{ V_t^{j,H} \phi_t(C_t^{j,H}) 
                - \E[V_t^{j,H} \phi_t(C_t^{j,H}) \, | \, \mathcal{F}_{t-1}^H] \right\}
                \Pto 0.
    \end{equation}
    Combining \eqref{eqn:convofcondexp} and \eqref{eqn:convpagainscondexpt} yields
    \begin{equation}
        \frac{1}{H} \sum_{j=1}^{H} V_t^{j, H} \phi_t( C_t^{j, H} ) \Pto \E_{\pi_t}[\phi_t].
    \end{equation}
    Since $c_t \mapsto 1 \in L^1(\mathcal{C}_t, \pi_t)$, the convergence also holds for the ratio estimator
    \begin{equation} \label{eqn:convpratioestimator}
        \sum_{j=1}^{H} \frac{V_t^{j, H}}{\sum_{k=1}^{H} V_t^{k, H}} \phi_t(C_t^{j, H})
        \Pto \E_{\pi_t}[\phi_t].
    \end{equation}
    Since \eqref{eqn:convpratioestimator} does not depend on the normalisation of $V_t^{j,H}$, we have also established the convergence of (\ref{eqn:convpratio}).
    This completes the first part of the proof.
    
    As a second step, we need to show that the SMC estimator (\ref{eqn:convpresampled}) for the resampled triangular array $\hat{ \mathbf C}_t$ converges in probability.
    Let $\hat{\mathcal F}_t^H = \sigma(C^{j, H}_t, V_t^{j, H}; j=1, \dots, H)$ be the $\sigma$-algebra of the $H$-th row before resampling.
    Similar to the previous case, we apply Theorem \ref{thm:condwlln} to the conditionally independent triangular array $\hat{\mathbf X}_t$ given $\{\hat{\mathcal F}_{t}^H\}_H$ with elements
    \begin{equation} 
        \hat X_t^{j, H} := \frac{1}{H} \phi_t(\hat C_t^{j, H}).
    \end{equation}
    The conditional expectation of the resampled particle is
    \begin{equation}
        \E\big[|\hat X_t^{j,H}| \, \big| \, \hat{\mathcal F}_{t}^H\big] 
            = \frac{1}{H} \sum_{j=1}^{H} \frac{V_t^{j, H}}{\sum_{k=1}^{H} V_t^{k, H}} |\phi_t(C_t^{j, H})|
            < \infty \;\; \text{a.s.} ;
    \end{equation}
    this quantity is finite since the weights are positive by assumption and therefore condition (i) of Theorem \ref{thm:condwlln} holds.
    In \eqref{eqn:convpratioestimator}, we have already seen that
    \begin{equation}\label{eqn:convofresampledarray}
        \sum_{j=1}^{H} \E[ \hat X_t^{j, H} \, | \, \hat{\mathcal F}_t^H]
            = \sum_{j=1}^{H} \frac{V_t^{j, H}}{\sum_{k=1}^{H} V_t^{k, H}} \phi_t(C_t^{j, H})
            \Pto \E_{\pi_t}[\phi_t].
    \end{equation}
    As before, the convergence in \eqref{eqn:convofresampledarray} implies boundedness in probability which is condition (ii).
    Furthermore, for any $\alpha>0$ and $H > \alpha/\eps$, we proceed as in (\ref{eqn:convofV})
    \begin{align}
        \sum_{j=1}^{H} \E\big[|\hat X_t^{j, H}| \mathds{1}_{\{|\hat X_t^{j, H}|\geq \eps\}} \, \big| \, \hat{\mathcal F}_{t}^H]
            &\leq \sum_{j=1}^{H} \frac{V_t^{j, H}}{\sum_{k=1}^{H} V_t^{k, H}} \left(|\phi_t| \mathds{1}_{\{|\phi_t|\geq \alpha\}}\right)(C_t^{j, H}) \nonumber \\
            &\Pto \E_{\pi_t}[|\phi_t| \mathds{1}_{\{|\phi_t|\geq \alpha\}}]
    \end{align}
    where the right-hand side converges to $0$ as $\alpha \to \infty$.
    Therefore, condition (iii) holds and we can apply Theorem \ref{thm:condwlln} which yields
    \begin{equation}
        \frac{1}{H} \sum_{j=1}^{H} \left\{ \phi_t(\hat C_t^{j,H}) 
                - \E[\phi_t(\hat C_t^{j,H}) \, | \, \hat{\mathcal F}_{t}^H] \right\}
                \Pto 0.
    \end{equation}
    Combined with \eqref{eqn:convofresampledarray}, we have shown
    \begin{equation}
        \frac{1}{H} \sum_{j=1}^{H} \phi_t(\hat C_t^{j,H}) \Pto \E_{\pi_t}[\phi_t]
    \end{equation}
    which completes the proof.
\end{proof}

\subsubsection{Proof of Theorem \ref{thm:clt}}
Now, we prove the Central Limit Theorem \ref{thm:clt} by induction over the levels $t$ and the different steps of the SMC method.
The strategy is very similar to the proof of Theorem \ref{thm:convinprob}.
To improve readability, we have split the proofs into two lemmas that we combine afterwards to prove the theorem.
\begin{lem}\label{lemma:cltproofpart1}
    Let $t>0$.
    Under the assumptions of Theorem \ref{thm:clt}, assume that we have shown convergence on level $t-1$
    \begin{equation} \label{eqn:cltpreviousstep}
        \frac{1}{\sqrt H} \sum_{j=1}^{H} \left\{ \phi_{t-1}(\hat C_{t-1}^{j, H}) - \E_{\pi_{t-1}}[\phi_{t-1}]\right\}
    		\Dto \mathcal{N}(0, \hat{\Sigma}_{t-1}(\phi_{t-1}))
    \end{equation}
    for $\phi_{t-1} \in A_{t-1}$. 
    Then, for every $\phi \in A_t$ we have
    \begin{align}
        \sqrt{H} \left\{ \sum_{j=1}^{H} \frac{W_t^{j, H}}{\sum_{k=1}^{H} W_t^{k, H}} \phi(C_t^{j, H})
            - \E_{\pi_t}[\phi] \right \}
            &\Dto \mathcal{N}(0, \Sigma_t(\phi))
            \label{eqn:cltproof1}
    \end{align}
    where
    \begin{equation}
    	\Sigma_t(\phi) = \E_{\pi_{t-1}}\left[\var_{\bar k_t}(V_t(\phi(C_t) - \E_{\pi_t}[\phi]))\right] + \hat{\Sigma}_{t-1}\left(\E_{\bar k_t}[V_t (\phi(C_t) - \E_{\pi_t}[\phi])]\right).
    \end{equation}
\end{lem}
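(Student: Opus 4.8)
The plan is to follow the inductive, conditioning-based strategy of \cite{cappe2005inference}: reduce the ratio estimator to its (unnormalised) numerator by Slutsky's theorem, split that numerator into a ``fresh-noise'' term handled by the conditional CLT of Theorem~\ref{thm:condclt} and a ``previous-step'' term handled by the induction hypothesis~\eqref{eqn:cltpreviousstep}, and then recombine the two via a characteristic-function computation.

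Writing $\bar\phi := \phi - \E_{\pi_t}[\phi]$ and using that the ratio estimator is invariant under rescaling of the weights (so I may use $V_t$ in place of $W_t$), I would first rewrite
\begin{equation}
    \sqrt{H}\left\{ \sum_{j=1}^H \frac{V_t^{j,H}}{\sum_{k=1}^H V_t^{k,H}}\phi(C_t^{j,H}) - \E_{\pi_t}[\phi]\right\}
    = \frac{G_H}{\frac{1}{H}\sum_{k=1}^H V_t^{k,H}}, \qquad G_H := \frac{1}{\sqrt H}\sum_{j=1}^H V_t^{j,H}\bar\phi(C_t^{j,H}).
\end{equation}
By Theorem~\ref{thm:convinprob} (applied to the constant function, whose average has limit $\E_{\pi_t}[1]=1$) the denominator tends to $1$ in probability, so by Slutsky's theorem it is enough to establish $G_H \Dto \mathcal N(0,\Sigma_t(\phi))$.

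Next, with $\mathcal{F}_{t-1}^H := \sigma(\hat C_{t-1}^{j,H}: j\le H)$, I would decompose $G_H = U_H + R_H$ into its conditionally centred part and its conditional mean,
\begin{align}
    U_H &= \frac{1}{\sqrt H}\sum_{j=1}^H \left\{ V_t^{j,H}\bar\phi(C_t^{j,H}) - \E[V_t^{j,H}\bar\phi(C_t^{j,H}) \mid \mathcal{F}_{t-1}^H]\right\}, \nonumber\\
    R_H &= \frac{1}{\sqrt H}\sum_{j=1}^H \E[V_t^{j,H}\bar\phi(C_t^{j,H}) \mid \mathcal{F}_{t-1}^H].
\end{align}
Since $(C_t^{j,H},V_t^{j,H})$ is generated by applying $\bar k_t$ to $\hat C_{t-1}^{j,H}$, the conditional mean equals $\psi(\hat C_{t-1}^{j,H})$ with $\psi(c_{t-1}) := \E_{\bar k_t(c_{t-1},\cdot)}[V_t\bar\phi(C_t)]$; the defining condition of $A_t$ together with $1 \in A_t$ and linearity give $\psi\in A_{t-1}$, while $\E_{\pi_{t-1}}[\psi] = \E_{\tilde\pi_t}[V_t\bar\phi] = \E_{\pi_t}[\bar\phi] = 0$. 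Hence $R_H = \frac{1}{\sqrt H}\sum_{j=1}^H(\psi(\hat C_{t-1}^{j,H}) - \E_{\pi_{t-1}}[\psi])$ and the induction hypothesis~\eqref{eqn:cltpreviousstep} yields $R_H \Dto \mathcal N(0,\hat\Sigma_{t-1}(\psi))$. For $U_H$ I would apply Theorem~\ref{thm:condclt} to $X_t^{j,H} := \frac{1}{\sqrt H}V_t^{j,H}\bar\phi(C_t^{j,H})$ conditioned on $\{\mathcal{F}_{t-1}^H\}_H$: condition~(i) holds by conditional independence and the $A_t$ moment bound; for condition~(ii) the conditional variance sum equals $\frac{1}{H}\sum_{j=1}^H\var_{\bar k_t(\hat C_{t-1}^{j,H},\cdot)}(V_t\bar\phi)$, which by the weak law of large numbers (Theorem~\ref{thm:convinprob} at level $t-1$) converges in probability to $\sigma^2 := \E_{\pi_{t-1}}[\var_{\bar k_t}(V_t\bar\phi)]$; and condition~(iii) is the conditional Lindeberg bound, handled exactly as in the proof of Theorem~\ref{thm:convinprob} by dominating the truncated sum for $H\ge \alpha/\eps$ and letting $\alpha\to\infty$ via dominated convergence using $\E_{\tilde\pi_t}[(V_t\bar\phi)^2]<\infty$. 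This gives $\E[e^{iuU_H}\mid\mathcal{F}_{t-1}^H]\Pto e^{-\sigma^2u^2/2}$.

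The recombination is the step that needs the most care. Since $R_H$ is $\mathcal{F}_{t-1}^H$-measurable,
\begin{equation}
    \E[e^{iu(U_H+R_H)}] = \E\!\left[e^{iuR_H}\,\E[e^{iuU_H}\mid\mathcal{F}_{t-1}^H]\right]
    = e^{-\sigma^2u^2/2}\,\E[e^{iuR_H}] + \E\!\left[e^{iuR_H}\big(\E[e^{iuU_H}\mid\mathcal{F}_{t-1}^H]-e^{-\sigma^2u^2/2}\big)\right].
\end{equation}
The last term vanishes as $H\to\infty$ by bounded convergence (the bracket is bounded by $2$ and tends to $0$ in probability), while the first term converges to $e^{-\sigma^2u^2/2}e^{-\hat\Sigma_{t-1}(\psi)u^2/2}$ by the CLT for $R_H$. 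Lévy's continuity theorem then gives $G_H = U_H + R_H \Dto \mathcal N(0,\sigma^2 + \hat\Sigma_{t-1}(\psi))$, and by construction $\sigma^2 + \hat\Sigma_{t-1}(\psi)=\Sigma_t(\phi)$; combined with the Slutsky reduction this proves~\eqref{eqn:cltproof1}. The main obstacle is precisely this splicing of a \emph{conditional} CLT for $U_H$ with an \emph{unconditional} CLT for the $\mathcal{F}_{t-1}^H$-measurable $R_H$: the two terms cannot simply be added, and one must exploit the measurability of $R_H$ together with the bounded-convergence argument above to see that their asymptotic independence produces a sum of variances. A secondary technical point is verifying the almost-sure finiteness of the conditional second moments in condition~(i) at the realised particle locations, which relies on the $A_t$ integrability conditions.
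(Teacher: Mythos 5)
Your proof is correct and takes essentially the same route as the paper's: the same decomposition of the (normalised-weight) estimator into a conditional mean handled by the induction hypothesis and a conditionally centred fluctuation handled by Theorem~\ref{thm:condclt}, with the same checks of conditions (i)--(iii) and the same characteristic-function recombination exploiting $\mathcal{F}_{t-1}^H$-measurability. The only cosmetic differences are that you pass to the ratio estimator by explicit centring plus Slutsky's theorem, whereas the paper centres $\phi$ without loss of generality and cites the delta method; these are equivalent here.
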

\begin{proof}
    We follow the steps of the proof of \cite[Theorem 9.3.7]{cappe2005inference}.
	Without loss of generality, we assume $\E_{\pi_t}[\phi] = 0$.
	Since the estimate in \eqref{eqn:cltproof1} does not depend on the normalisation of the weights, we will prove the result for the normalised weights $V_t^{j, H}$.
	
    Let $\mathcal F_{t-1}^H = \sigma(\hat C_{t-1}^{j, H}, j=1, \dots, H)$ be the $\sigma$-algebra of the previous step.
    We split the level $t$ estimator into two parts
    \begin{align*}
        \frac{1}{H} &\sum_{j=1}^{H} V^{j, H}_t \phi(C_t^{j, H})\\
            &= \frac{1}{H} \sum_{j=1}^{H} \E[V^{j, H}_t \phi(C_t^{j, H}) \, | \, \mathcal{F}_{t-1}^H]
            + \frac{1}{H} \sum_{j=1}^{H} \left\{ V^{j, H}_t \phi(C_t^{j, H}) 
                    - \E[V^{j, H}_t \phi(C_t^{j, H}) \, | \, \mathcal{F}_{t-1}^H] \right\} \\
            &= A_H + B_H. \numberthis
    \end{align*}
    For $A_H$, the convergence
    \begin{equation}\label{eqn:convAH}
        \sqrt{H} A_H = \frac{1}{\sqrt{H}} \sum_{j=1}^{H} \E_{\bar k_t(\hat{C}_{t-1}^{j, H}, \cdot)}[V_t \phi(C_t)]
            \Dto \mathcal{N}\left(0, \hat\Sigma_{t-1}(\E_{\bar k_t}[V_t \phi(C_t)])\right)
    \end{equation}
    follows by Assumption (\ref{eqn:assumpat}) of Theorem \ref{thm:clt} and the convergence in the previous step \eqref{eqn:cltpreviousstep}.
    For $B_H$, we apply Theorem \ref{thm:condclt} to the triangular array $\mathbf X_t$ with elements 
    \begin{equation}
        X_t^{j, H}:= H^{-1/2} V_t^{j, H} \phi(C_t^{j, H}).
    \end{equation}
    By construction, the array $\mathbf X_t$ is conditionally independent given $\{\mathcal F_{t-1}^H\}_H$ and by Assumption (\ref{eqn:assumpat}) of Theorem \ref{thm:clt} we have $c_{t-1} \mapsto \E_{\bar k_t(c_{t-1}, \cdot)}[(V_t \phi(C_t))^2] < \infty$ almost surely.
    This is condition (i).
    Furthermore, we fulfil condition (ii) since
    \begin{align}
        \sum_{j=1}^{H} &\left\{ \E[(X_t^{j, H})^2 \, | \, \mathcal F_{t-1}^H] - (\E[X_t^{j, H} \, | \, \mathcal F_{t-1}^H])^2\right\} \nonumber \\
            &= \frac{1}{H} \sum_{j=1}^{H} \left\{ \E_{\bar k_t(\hat C_{t-1}^{j, H}, \cdot)}[(V_t \phi(C_t))^2]
                - (\E_{\bar k_t(\hat C_{t-1}^{j, H}, \cdot)}[V_t \phi(C_t)])^2 \right\} \nonumber \\
            &\Pto \E_{\pi_{t-1}}[ \E_{\bar k_t(C_{t-1}, \cdot)}[(V_t \phi(C_t))^2] - (\E_{\bar k_t(C_{t-1}, \cdot)}[V_t \phi(C_t)])^2] \nonumber \\
            &= \E_{\pi_{t-1}}[\var_{\bar k_t}(V_t\phi(C_t))]
    \end{align}
    where we used Theorem \ref{thm:convinprob} and $c_{t-1} \mapsto \E_{\bar k_t(c_{t-1}, \cdot)}[(V_t \phi(C_t))^2] \in L^1(\mathcal{C}_{t-1}, \pi_{t-1})$ for the convergence step.
    For condition (iii), we use the same argument as in the proof of Theorem \ref{thm:convinprob}.
    For $\eps > 0$, we have
    \begin{equation} \label{eqn:cltcondexpxsq}
        \E\big[(X_t^{j, H})^2 \mathds{1}_{\{|X_t^{j, H}|\geq \eps\}} \,\big|\, \mathcal{F}_{t-1}^H\big]
            = H^{-1} \E_{\bar k_t(\hat C_{t-1}^{j, H}, \cdot)}[ (V_t \phi_t(C_t))^2 \mathds{1}_{\{|V_t \phi_t(C_t)|\geq \sqrt{H}\eps\}}].
    \end{equation}
    Similar to \eqref{eqn:convofV}, summing \eqref{eqn:cltcondexpxsq} over $j=1, \dots, H$ for any $\alpha > 0$ and $H > (\alpha/\eps)^2$ yields
    \begin{align}
        \sum_{j=1}^{H} \E\big[ (X_t^{j, H})^2 \mathds{1}_{\{|X_t^{j, H}| \geq \eps \}} \,\big|\, \mathcal{F}_{t-1}^H\big]
            &\leq  \frac{1}{H} \sum_{j=1}^{H} \E_{\bar k_t(\hat C_{t-1}^{j, H}, \cdot)}[(V_t \phi(C_t))^2 \mathds{1}_{\{|V_t \phi(C_t)| \geq \alpha\}}] \nonumber \\
            &\Pto \E_{\tilde{\pi}_t}[(V_t \phi(C_t))^2 \mathds{1}_{\{|V_t \phi(C_t)| \geq \alpha\}}]
    \end{align}
    as $H \to \infty$.
    The right-hand side converges to $0$ as $\alpha \to \infty$ and therefore condition (iii) holds.
    In total, we can apply Theorem \ref{thm:condclt} for our definition of $\mathbf X_t$ and get for every $u \in \R$
    \begin{equation}
        \E\left[ \exp(iu \sqrt{H} B_H) \, \middle | \, \mathcal{F}_{t-1}^H \right] 
            \Pto \exp( - \E_{\pi_{t-1}}[\var_{\bar k_t}(V_t\phi(C_t))] u^2/2).
    \end{equation}
    Together with \eqref{eqn:convAH}, this gives us the joint convergence of
    \begin{align}
        \E[\exp&(iu\sqrt{H}(A_H + B_H))] \nonumber \\
            &= \E\left[ \E\left[ \exp(iu \sqrt{H} B_H) \, \middle | \, \mathcal{F}_{t-1}^H \right] \exp(i u \sqrt{H} A_H) \right] \nonumber \\
            &\longrightarrow \exp( -(\E_{\pi_{t-1}}[\var_{\bar k_t}(V_t\phi(C_t))] + \hat{\Sigma}_{t-1}(\E_{\bar k_t}[V_t \phi(C_t)])) u^2/2)
    \end{align}
    which proves the central limit theorem
    \begin{equation} \label{eqn:cltnonratioest}
        \frac{1}{\sqrt{H}} \sum_{j=1}^{H} V^{j, H}_t \phi(C_t^{j, H})
            \Dto \mathcal{N}\left(0, \E_{\pi_{t-1}}[\var_{\bar k_t}(V_t\phi(C_t))] + \hat{\Sigma}_{t-1}(\E_{\bar k_t}[V_t \phi(C_t)])\right).
    \end{equation}
    
     Since $c_{t} \mapsto 1 \in A_t$, we can apply the delta method \cite[Theorem 8.12]{lehmann2006theory} to determine the asymptotic behaviour of the corresponding ratio estimator
     \begin{equation}
            \sqrt{H} \left\{ \sum_{j=1}^{H} \frac{V_t^{j, H}}{\sum_{k=1}^{H} V_t^{k, H}} \phi(C_t^{j, H})
                - \E_{\pi_t}[\phi] \right\}
                \Dto \mathcal{N}\left(0, \Sigma_t(\phi)\right)
     \end{equation}
     with $\Sigma_t$ as defined in Lemma \ref{lemma:cltproofpart1}.
\end{proof}

\begin{lem} \label{lemma:cltproofpart2}
    Under the assumptions of Lemma \ref{lemma:cltproofpart1}, for every $\phi \in A_t$ we have
    \begin{equation}
        \frac{1}{\sqrt H} \sum_{j=1}^{H} \left\{ \phi(\hat C_t^{j, H}) - \E_{\pi_t}[\phi]\right\}
            \Dto \mathcal{N}(0, \hat{\Sigma}_t(\phi))
    \end{equation}
    where
    \begin{equation}
    	\hat \Sigma_t(\phi) = \Sigma_t(\phi) + \var_{\pi_t}(\phi).
    \end{equation}
\end{lem}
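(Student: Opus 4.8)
The plan is to follow the resampling step in the proof of Theorem~\ref{thm:convinprob}, but now at the level of the CLT, using the conditional central limit theorem (Theorem~\ref{thm:condclt}) in place of the conditional weak law. Writing $\hat{\mathcal F}_t^H = \sigma(C_t^{j,H}, V_t^{j,H}; j=1,\dots,H)$ for the $\sigma$-algebra before resampling, I would decompose the resampled estimator into a fluctuation term and its conditional expectation,
\begin{align*}
\frac{1}{\sqrt H}\sum_{j=1}^H\{\phi(\hat C_t^{j,H}) - \E_{\pi_t}[\phi]\}
&= \frac{1}{\sqrt H}\sum_{j=1}^H\{\phi(\hat C_t^{j,H}) - \E[\phi(\hat C_t^{j,H})\,|\,\hat{\mathcal F}_t^H]\} \\
&\quad + \sqrt H\Big\{\sum_{j=1}^H\frac{V_t^{j,H}}{\sum_k V_t^{k,H}}\phi(C_t^{j,H}) - \E_{\pi_t}[\phi]\Big\},
\end{align*}
using that under multinomial resampling the conditional expectation of a single resampled particle is exactly the ratio estimator. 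The second summand is therefore the centred ratio estimator of Lemma~\ref{lemma:cltproofpart1}, so it converges in distribution to $\mathcal N(0,\Sigma_t(\phi))$; crucially it is $\hat{\mathcal F}_t^H$-measurable.

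For the fluctuation term I would apply Theorem~\ref{thm:condclt} to the conditionally independent array $\hat X_t^{j,H} = H^{-1/2}\phi(\hat C_t^{j,H})$ given $\{\hat{\mathcal F}_t^H\}_H$. Condition~(i) holds because $\phi \in A_t \subseteq L^2(\spc_t,\pi_t)$ and each resampled particle is supported on the finite set $\{C_t^{k,H}\}_k$, so its conditional second moment is a.s.\ finite. For condition~(ii), the conditional variance sum equals $\sum_k \tfrac{V_t^{k,H}}{\sum_m V_t^{m,H}}\phi(C_t^{k,H})^2 - (\sum_k \tfrac{V_t^{k,H}}{\sum_m V_t^{m,H}}\phi(C_t^{k,H}))^2$, which by Theorem~\ref{thm:convinprob} applied to $\phi^2 \in L^1$ and to $\phi$ converges in probability to $\var_{\pi_t}(\phi)$. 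Condition~(iii) is the Lindeberg truncation; rewriting it as in \eqref{eqn:convofV}, for $H \ge (\alpha/\eps)^2$ it is bounded above by $\sum_k \tfrac{V_t^{k,H}}{\sum_m V_t^{m,H}}\phi(C_t^{k,H})^2\mathds{1}_{\{|\phi(C_t^{k,H})|\ge\alpha\}} \Pto \E_{\pi_t}[\phi^2\mathds{1}_{\{|\phi|\ge\alpha\}}]$, and the right-hand side tends to $0$ as $\alpha\to\infty$ by dominated convergence. Theorem~\ref{thm:condclt} then yields $\E[\exp(iu U_H)\,|\,\hat{\mathcal F}_t^H] \Pto \exp(-\var_{\pi_t}(\phi)u^2/2)$ for the fluctuation term $U_H$.

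The final step assembles the two pieces. Denoting by $R_H$ the $\hat{\mathcal F}_t^H$-measurable ratio-estimator term, I would write $\E[\exp(iu(U_H+R_H))] = \E[\E[\exp(iuU_H)\,|\,\hat{\mathcal F}_t^H]\exp(iuR_H)]$ and split off the deterministic limit: since $\E[\exp(iuU_H)\,|\,\hat{\mathcal F}_t^H]-\exp(-\var_{\pi_t}(\phi)u^2/2)\Pto 0$ and is bounded while $|\exp(iuR_H)|\le 1$, bounded convergence annihilates the cross term, leaving $\exp(-\var_{\pi_t}(\phi)u^2/2)\,\E[\exp(iuR_H)] \to \exp(-(\var_{\pi_t}(\phi)+\Sigma_t(\phi))u^2/2)$ by Lemma~\ref{lemma:cltproofpart1}. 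This is the characteristic function of $\mathcal N(0,\hat\Sigma_t(\phi))$ with $\hat\Sigma_t(\phi)=\Sigma_t(\phi)+\var_{\pi_t}(\phi)$.

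The main obstacle is precisely this last assembly: the conditional CLT controls the fluctuation only given $\hat{\mathcal F}_t^H$, whereas the term it must be combined with is itself random and $\hat{\mathcal F}_t^H$-measurable. The clean route is the Slutsky-type argument above, which exploits that the conditional characteristic function of $U_H$ converges to a deterministic constant; this simultaneously justifies factorising the limiting characteristic function and shows that the resampling noise and the pre-resampling fluctuations become asymptotically independent, so that their variances simply add to give $\hat\Sigma_t$.
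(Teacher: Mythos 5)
Your proposal is correct and follows essentially the same route as the paper's proof: the same decomposition into the conditional-expectation (ratio-estimator) part and the resampling-fluctuation part, the same application of Theorem~\ref{thm:condclt} to $\hat X_t^{j,H} = H^{-1/2}\phi(\hat C_t^{j,H})$ with the same verification of conditions (i)--(iii), and the same characteristic-function factorisation to combine the two pieces. The only difference is cosmetic: you spell out the final Slutsky-type bounded-convergence step explicitly, which the paper compresses into a reference back to the combination argument in Lemma~\ref{lemma:cltproofpart1}.
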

\begin{proof}
    We follow the steps of the proof of \cite[Theorem 9.2.14]{cappe2005inference} and Lemma \ref{lemma:cltproofpart1}.
    Without loss of generality, we again assume $\E_{\pi_t}[\phi] = 0$.
    Let $\hat{\mathcal F}_t^H = \sigma(C_t^{j, H}, V_t^{j, H}; j=1, \dots, H)$ be the $\sigma$-algebra of the $H$-th row prior to resampling.
    We can split the SMC estimator into
    \begin{align} \label{eqn:cltlemma2split}
        \frac{1}{H} \sum_{j=1}^{H} \phi(\hat C_t^{j, H})
            &= \frac{1}{H} \sum_{j=1}^{H} \E[\phi(\hat C_t^{j, H}) \, | \, \hat{ \mathcal F}_t^H]
                + \frac{1}{H} \sum_{j=1}^{H} \left\{ \phi(\hat C_t^{j, H}) - \E[\phi(\hat C_t^{j, H}) \, | \, \hat{ \mathcal F}_t^H] \right\} \nonumber \\
            &= \hat A_H + \hat B_H.
    \end{align}
    By the definition of multinomial resampling, we have
    \begin{equation}\label{eqn:cltlemma2defcondexp}
        \E[\phi(\hat C_t^{j, H}) \, | \, \hat{ \mathcal F}_t^H]
            = \sum_{l=1}^{H} \frac{V_t^{l, H}}{\sum_{k=1}^{H} V^{k, H}_t} \phi(C_t^{l, H})
    \end{equation}
    and therefore, we can apply Lemma \ref{lemma:cltproofpart1} to $\hat A_H$
    \begin{equation}  
        \sqrt{H} \hat A_H 
            = \sqrt{H} \sum_{j=1}^{H} \frac{V_t^{j, H}}{\sum_{k=1}^{H} V^{k, H}_t} \phi(C_t^{j, H}) 
            \Dto \mathcal{N}(0, \Sigma_t(\phi)).  
    \end{equation}
    For the convergence of $\hat B_H$, we apply Theorem \ref{thm:condclt} to the triangular array $\hat{\mathbf X}_t$ with elements
    \begin{equation}
        \hat X_t^{j, H} = H^{-1/2} \phi(\hat C_t^{j, H}).
    \end{equation}
    The elements of $\hat{\mathbf X}_t$ are conditionally independent given $\{\hat{\mathcal F}_t^H\}_H$ and their conditional expectation is finite a.s.\ since the weights are positive, see \eqref{eqn:cltlemma2defcondexp}.
    This is condition (i) of Theorem \ref{thm:condclt}.
    To check condition (ii), note that
    \begin{equation}
        \E[(\hat X_t^{j, H})^2 \, | \, \hat{\mathcal F}_t^H ]
            = \sum_{j=1}^{H} \frac{V_t^{j, H}}{\sum_{k=1}^{H} V_t^{k, H}} \phi(C_t^{j, H})^2
            \Pto \E_{\pi_t}[\phi^2]
    \end{equation}
    since $\phi^2 \in L^1(\mathcal{C}_t, \pi_t)$ and we can apply Theorem \ref{thm:convinprob}.
    Hence
    \begin{equation}
        \sum_{j=1}^{H} \left\{ \E[ (\hat X_t^{j, H})^2 \, | \, \hat{\mathcal F}_t^H ]
            - \left( \E [ \hat X_t^{j, H} \, | \, \hat{ \mathcal F}_t^H ] \right)^2 \right\}
            \Pto \var_{\pi_t}[\phi].
    \end{equation}
    Finally, we need to check condition (iii).
    For $\eps > 0, \alpha > 0$, and $H>(\alpha/\eps)^2$, we have
    \begin{align}
        \sum_{j=1}^{H} \E\big[ (\hat X_t^{j, H})^2 \mathds{1}_{\{|\hat X_t^{j, H}| \geq \eps \}} \,\big|\, \hat{\mathcal F}_t^H\big]
            &\leq \sum_{j=1}^{H} \frac{V_t^{j, H}}{\sum_{k=1}^{H} V_t^{k, H}} \left(\phi^2 \mathds{1}_{\{|\phi| \geq \alpha\}}\right)(C_t^{j,H}) \nonumber \\
            &\Pto \E_{\pi_t}[\phi^2 \mathds{1}_{\{|\phi| \geq \alpha\}}]
    \end{align}
    which implies condition (iii) as before.
    Hence we can apply Theorem \ref{thm:condclt} to get
    \begin{equation}
        \E\left[ \exp(iu \sqrt{H} B_H) \, \middle | \, \hat{\mathcal{F}}_t^H \right] 
            \Pto \exp( -\var_{\pi_{t}}[\phi] u^2/2).
    \end{equation}
     Combining the convergence of $A_H$ and $B_H$ as in the proof on Lemma \ref{lemma:cltproofpart1} completes the proof.
\end{proof}

Now, we combine the results of Lemma \ref{lemma:cltproofpart1} and Lemma \ref{lemma:cltproofpart2} to prove Theorem \ref{thm:clt}.
\begin{proof}[Proof of Theorem \ref{thm:clt}]
    For $t=0$, applying the canonical CLT for i.i.d.~random variables yields
    \begin{equation}
        \frac{1}{\sqrt{H}} \sum_{j=1}^{H} \left\{ \phi_0(\hat C_0^{j, H}) - \E_{\pi_0}[\phi_0] \right\}
            \Dto \mathcal{N}(0, \hat \Sigma_0(\phi_0)) \;\; \text{where} \;\; \hat \Sigma_0(\phi_0) = \var_{\pi_0}(\phi_0)
    \end{equation}
    for every $\phi_0 \in A_0$.
    
    For $t>0$, assume that we have shown convergence on level $t-1.$
    Let $\phi_t \in A_t$.
    Lemma \ref{lemma:cltproofpart1} proves the convergence of the ratio estimator
    \begin{align}
        \sqrt{H} \left\{ \sum_{j=1}^{H} \frac{W_t^{j, H}}{\sum_{k=1}^{H} W_t^{k, H}} \phi_t(C_t^{j, H})
            - \E_{\pi_t}[\phi_t] \right \}
            &\Dto \mathcal{N}(0, \Sigma_t(\phi_t))
    \end{align}
    and Lemma \ref{lemma:cltproofpart2} proves the convergence of the resampled array
    \begin{equation}
        \frac{1}{\sqrt H} \sum_{j=1}^{H} \left\{ \phi_t(\hat C_t^{j, H}) - \E_{\pi_t}[\phi_t]\right\}
            \Dto \mathcal{N}(0, \hat{\Sigma}_t(\phi_t)).
    \end{equation}
    This completes the proof of the theorem.
\end{proof}

\subsubsection{Final remarks}
To understand why a.s.\ convergence requires stricter conditions on the quantity of interest $\phi$ and weights $v_t$ compared to the other convergence results, it is useful to compare Theorem \ref{thm:condwlln} and Theorem \ref{thm:condclt} with the strong law of large numbers for triangular arrays in Theorem \ref{thm:slln}.
To apply the strong law, we need to establish the uniform moment bound \eqref{eqn:sllncondition} on the entries in the triangular array, that is for every particle position with non-zero probability.
To ensure this, we had to restrict the a.s.~convergence results to globally bounded quantities of interest and bounded moments of the weights.
In comparison, checking the prerequisites of the theorems in this section only requires conditional independence (which is true by construction), finite conditional expectations (which is easy to check), and limit properties of the conditional expectations.
The latter can be shown using convergence on the previous level.

\bibliographystyle{plain} 
\bibliography{references}

\end{document}